\newcommand{\qw}[1][-1]{\ar @{-} [0,#1]}
\newcommand{\gate}[1]{*{\xy *+<.6em>{#1};p\save+LU;+RU **\dir{-}\restore\save+RU;+RD **\dir{-}\restore\save+RD;+LD **\dir{-}\restore\POS+LD;+LU **\dir{-}\endxy} \qw}
\newcommand{\measureD}[1]{*{\xy*+=+<.5em>{\vphantom{\rule{0em}{.1em}#1}}*\cir{r_l};p\save*!R{#1} \restore\save+UC;+UC-<.5em,0em>*!R{\hphantom{#1}}+L **\dir{-} \restore\save+DC;+DC-<.5em,0em>*!R{\hphantom{#1}}+L **\dir{-} \restore\POS+UC-<.5em,0em>*!R{\hphantom{#1}}+L;+DC-<.5em,0em>*!R{\hphantom{#1}}+L **\dir{-} \endxy} \qw}
\newcommand{\multimeasureD}[2]{*+<1em,.9em>{\hphantom{#2}}\save[0,0].[#1,0];p\save !C *{#2},p+LU+<0em,0em>;+RU+<-.8em,0em> **\dir{-}\restore\save +LD;+LU **\dir{-}\restore\save +LD;+RD-<.8em,0em> **\dir{-} \restore\save +RD+<0em,.8em>;+RU-<0em,.8em> **\dir{-} \restore \POS !UR*!UR{\cir<.9em>{r_d}};!DR*!DR{\cir<.9em>{d_l}}\restore \qw}
\newcommand{\multigate}[2]{*+<1em,.9em>{\hphantom{#2}} \qw \POS[0,0].[#1,0];p !C *{#2},p \save+LU;+RU **\dir{-}\restore\save+RU;+RD **\dir{-}\restore\save+RD;+LD **\dir{-}\restore\save+LD;+LU **\dir{-}\restore}
\newcommand{\ghost}[1]{*+<1em,.9em>{\hphantom{#1}} \qw}
\newcommand{\Qcircuit}[1][0em]{\xymatrix @*=<#1>}
\newcommand{\pureghost}[1]{*+<1em,.9em>{\hphantom{#1}}}
\newcommand{\multiprepareC}[2]{*+<1em,.9em>{\hphantom{#2}}\save[0,0].[#1,0];p\save !C
  *{#2},p+RU+<0em,0em>;+LU+<+.8em,0em> **\dir{-}\restore\save +RD;+RU **\dir{-}\restore\save
  +RD;+LD+<.8em,0em> **\dir{-} \restore\save +LD+<0em,.8em>;+LU-<0em,.8em> **\dir{-} \restore \POS
  !UL*!UL{\cir<.9em>{u_r}};!DL*!DL{\cir<.9em>{l_u}}\restore}
\newcommand{\prepareC}[1]{*{\xy*+=+<.5em>{\vphantom{#1\rule{0em}{.1em}}}*\cir{l^r};p\save*!L{#1} \restore\save+UC;+UC+<.5em,0em>*!L{\hphantom{#1}}+R **\dir{-} \restore\save+DC;+DC+<.5em,0em>*!L{\hphantom{#1}}+R **\dir{-} \restore\POS+UC+<.5em,0em>*!L{\hphantom{#1}}+R;+DC+<.5em,0em>*!L{\hphantom{#1}}+R **\dir{-} \endxy}}
\newcommand{\poloFantasmaCn}[1]{{{}^{#1}_{\phantom{#1}}}}
\definecolor{shadecolor}{rgb}{.725,.725,.725}
\theoremstyle{definition}
\newtheorem{definition}{Definition}
\newtheorem*{definition*}{Definition}
\theoremstyle{plain}
\newtheorem{proposition}{Proposition}
\newtheorem*{corollary*}{Corollary}
\newcommand\State[1]{|#1)}
\newcommand\Effect[1]{(#1|}
\newcommand{\RBraKet}[2]{(#1|#2)}
\newcommand\SetStates{\mathrm{St}}
\newcommand\SetEffects{\mathrm{Eff}}
\newcommand\SetTransf{\mathrm{Transf}}
\newcommand\Test[1]{\bm{\mathcal #1}}
\newcommand\Transformation[1]{\mathcal{#1}}
\newcommandx\NSState[5][usedefault, addprefix=\global, 2=, 3=, 4=,
\newcommandx\NSEffect[5][usedefault, addprefix=\global, 2=, 3=, 4=,
\newcommandx\SState[5][usedefault, addprefix=\global, 2=, 3=, 4=,
\newcommandx\SEffect[5][usedefault, addprefix=\global, 2=, 3=, 4=, 5=]{
\Effect{\NSEffect{#1}[#2][#3][#4][#5]} }
\newcommand\ExtrTransformation[4]{\Transformation{F}_{#1\,#2}^{#3\,#4}}
\newcommandx\ProbCond[4][usedefault, addprefix=\global, 1=,
\newcommand{\ustickcool}[1]{*!D!<0em,-.1em>=<0em>{\scriptstyle #1}}
\def\sys#1{{\mathrm #1}}
\def\sn{\sys{n}}
\def\sm{\sys{m}}
\def\sp{\sys{p}}
\def\sq{\sys{q}}
\def\sA{\sys{A}}
\def\sB{\sys{B}}
\def\sC{\sys{C}}
\def\sD{\sys{D}}
\def\sI{\sys{I}}
\begin{document}
\title{Determinism without causality}
\author[quit,infn]{Giacomo Mauro D'Ariano\fnref{url}}
\ead{dariano@unipv.it}
\author[quit]{Franco Manessi\fnref{url}}
\ead{franco.manessi01@ateneopv.it}
\author[quit,infn]{Paolo Perinotti\fnref{url}} 
\ead{paolo.perinotti@unipv.it}
\address[quit]{QUIT group,  Dipartimento di Fisica, via Bassi   6, 27100 Pavia, Italy.} 
\address[infn]{INFN Gruppo IV, Sezione di Pavia, via Bassi, 6, 27100 Pavia, Italy.} 
\date{\today}
\fntext[url]{URL: \url{http://www.quantummechanics.it}}

\begin{abstract}
Causality has been often confused with the notion of determinism. It is mandatory to separate the
two notions in view of the debate about quantum foundations. Quantum theory provides an example of
causal not-deterministic theory. Here we introduce a toy operational theory that is deterministic
and  non-causal, thus proving that the two notions of causality and determinism are totally
independent. 
\end{abstract}
\begin{keyword}
quantum theory \sep operational probabilistic theories \sep causality \sep determinism
\end{keyword}

\maketitle

\section{Introduction}
Causality is subject of a very extensive literature, encompassing hundreds of contemporary books and
technical articles. It hits a wide spectrum of disciplines, ranging from pure philosophy to law,
economics, natural sciences, and, in particular, physics. Perhaps the most natural connection with
physics is in philosophy, from the early work of Aristotle, to the cornerstone of Ren\'ee Descartes,
who broke the ground for the modern view of David Hume and Immanuel Kant, up to the contemporary
works on physical causation of Wesley Salmon \cite{salmon1998cauality} and Phil Dowe
\cite{dowe2007physical}.

The recent reconsideration of foundations of physics, with particular focus on quantum theory, has
brought research in theoretical physics to explore issues in the territory shared with philosophy
and epistemology.  A paradigmatic case is the issue of realism raised by the founding fathers von
Neumann \cite{vonNeumann1932} and Einstein \cite{EPR} in regards of completeness of quantum theory.

The problem of causality has remained in the realm of philosophy, and stayed 
only in
the background of physics, without the status of a physical law or the rank of a principle. Most of
the time causality creeps in the form of {\em ad hoc} assumptions based on empirical
evidences---like the discard of advanced potentials in electrodynamics or the Kramers-Kronig
relations---or it is part of the interpretation of the theory---e.g.\ in special relativity---or
else it is hidden in the theoretical framework, as in Hardy axiomatization of quantum theory
\cite{Hardy:2001jk}.

A notion that is traditionally connected with causality in physics and philosophy is {\em
  determinism}, which is deeply entangled with causality, to the extent that the two are often
merged into the {\em causal determinism}, or even confused, as in the exemplar quotation from Max
Planck: \emph{``An event is causally determined if it can be predicted with certainty''}
\cite{planck1941kausalbegriff}. This confusion between the two notions is the source of the common
misleading way of regarding quantum correlations as "spooky action at a distance"---the commonplace
of perfect EPR correlations interpreted as causation.

The notion of determinism arose within the clockwork-universe vision of classical mechanics,
assessing that the state of a system at an initial time completely determines the state at any later
time. Classical mechanics, however, identifies the {\em state} (the point in the phase-space) with
the {\em measurement-outcome}, while the two notions are radically different in quantum theory, and
more generally in {\em operational probabilistic theories} \cite{QUIT-ProbTheories,QUIT-Arxiv}.
These allow us to define determinism outside the framework of classical mechanics which is already
deterministic, avoiding the confusion between state and measurement-outcome.  In a probabilistic
context \cite{QUIT-ProbTheories} determinism is identified with the property of a theory of having
all probabilities of physical events equal to either zero or one---a definition which has no causal
connotation. 

The property of causality within classical theory is trivialized by the irrelevance of the notion of
measurement, which is identified with that of state itself. Complementarity is the feature that
breaks the classical identification between observation and preparation (measurement and state).
Causality is the independence of the probability of preparation from the choice of observation: this
definition of causality distills all the intuitive guises in which it appears in physics, with an
intimate relation with the Einsteinian notion. In this formulation it is the first axiom of quantum
theory in the derivation of Ref.~\cite{QUIT-Arxiv}.

Quantum theory provides a relevant example of operational probabilistic theory that is causal and
not deterministic. In this paper we introduce a toy theory that is deterministic and non-causal. The
purpose is to prove in this way that neither causality implies determinism, nor determinism implies
causality, namely the two notions are logically independent.  In the concluding section we will
further discuss about the relation between the definition of causality of Sect.
\ref{sec:prob-theories} and the customary problem of physical causation along with the cause-effect
connection.

\section{Review on Operational Probabilistic 
Theories}\label{sec:prob-theories}
Before starting we need to review the basic definitions and notations for Operational Probabilistic
Theories (OPT). For a detailed discussion see \cite{QUIT-ProbTheories}.  The basic notion in the
operational framework is that of {\em test}. A test $\Test{A}=\{\Transformation{A}_{i}\}$ describes
an elementary operation which generally produces the readout of an outcome $i$, heralding the
occurrence of an {\em event} $\Transformation{A}_{i}$. Tests are also specified by an input and an
output label, e.g.\ $\sA,\sB$, which identify the {\em system types} ({\em systems}, for short).
The test \( \Test{A} \) and its building events \( \Transformation{A}_{i} \in \Test{A} \) can be
represented by means of boxes as \( \Qcircuit @C=1em @R=.7em @! R { & \ustickcool{\sA}\qw &
  \gate{\Test{A}} & \ustickcool{\sB}\qw &\qw} \) and \( \Qcircuit @C=1em @R=.7em @! R { &
  \ustickcool{\sA}\qw & \gate{\Transformation{A}_{i}} & \ustickcool{\sB}\qw &\qw} \) respectively.
The role of labelling input and output systems is to provide rules for connecting tests in
sequences: an output wire labeled $\sA$ can be connected only to an input wire with the same label
$\sA$.  Notice that the input/output relation has no causal connotation, and does not entail an
underlying ``time arrow''. Here ``input/output'' has to be understood as a functional dependence,
namely the relation that links the variable $x$ to the function evaluation \( f(x) \). As it will be
clear shortly, only in a causal theory it is possible to understand the input/output relation as a
time-arrow.

The event \( \Transformation{B}_{j} \circ \Transformation{A}_{i} \) belonging to the sequential
composition \( \Test{B}\circ\Test{A} \) of the tests \( \Test{A} \) and \( \Test{B} \) is
represented as \( \Qcircuit @C=1em @R=.7em @! R {& \ustickcool{\sA}\qw & \gate
  {\Transformation{A}_{i}} & \ustickcool{\sB}\qw &
  \gate{\Transformation{B}_{j}}&\ustickcool{\sC}\qw&\qw} \) (a similar graphical representation
holds also for the test \( \Test{B}\circ\Test{A} \) itself).  For every system $\sA$ there exists a
unique singleton test $\{\Transformation{I}_{\sA} \}$ such that $\Transformation{I}_{\sB}
\circ\Transformation{A}=\Transformation{A}\circ\Transformation{I}_{\sA}$ for every event
$\Transformation{A}$ with input $\sA$ and output $\sB$. For every couple of systems $(\sA,\sB)$ we
can form the composite system $\sC:=\sA\sB$, on which we can perform tests $\Test A\otimes\Test B$
with events $\Transformation{A}_i\otimes\Transformation{B}_j$ in {\em parallel composition}
represented as follows
\begin{equation*}
  \begin{aligned}
    \Qcircuit @C=1em @R=.7em @! R {& \qw \poloFantasmaCn \sA & \multigate{1} {A_i\otimes B_j} & \qw
      \poloFantasmaCn \sB &\qw\\
      & \qw \poloFantasmaCn \sC & \ghost {A_i\otimes B_j} & \qw \poloFantasmaCn \sD &\qw}
  \end{aligned}=
  \begin{aligned}
    \Qcircuit @C=1em @R=.7em @! R {& \qw \poloFantasmaCn \sA & \gate {A_i} & \qw \poloFantasmaCn \sB
    &\qw\\ & \qw \poloFantasmaCn \sC & \gate {B_j} & \qw \poloFantasmaCn \sD &\qw}
  \end{aligned}
\end{equation*}
and satisfying the following condition: \[ ( \Transformation{C}_{h} \otimes \Transformation{D}_{k} )
\circ ( \Transformation{A}_{i} \otimes \Transformation{B}_{j} ) = ( \Transformation{C}_{h} \circ
\Transformation{A}_{i} ) \otimes ( \Transformation{D}_{k} \circ \Transformation{B}_{j} ). \] Notice
that here $\otimes$ is a formal symbol for parallel composition, and not the usual tensor product of
linear spaces.  There is a special system type $\sI$, the {\em trivial system}, such that
$\sA\sI=\sI\sA=\sA$. The tests with input system $\sI$ and output $\sA$ are called {\em
  preparation-tests} of $\sA$, while the tests with input system $\sA$ and output $\sI$ are called
{\em observation-tests} of $\sA$.  Preparation-events of $\sA$ are denoted by the symbols \(
\State{\rho}_\sA \) or \( \Qcircuit @C=.5em @R=.5em { \prepareC{\rho} & \ustickcool{\sA} \qw & \qw }
\), and observation-events by \( \Effect{ c}_\sA \) or \( \Qcircuit @C=.5em @R=.5em { &
  \ustickcool{\sA} \qw & \measureD{ c} } \).  Note that the words {\em preparation-test} and {\em
  observation-test} have an intrinsic causal connotation (usually one observes something that has
been prepared, and not viceversa), however, here the two words should be taken only as technical
terms. The two terms recover their usual meaning in a causal theory--the commonly studied case--and
our abuse of terminology is for the sake of limiting temporary technical words.

An arbitrary complex test obtained by parallel and sequential composition of box diagrams is called
\emph{circuit}. 
\begin{figure}
  \begin{equation*}
    \begin{aligned}
    \Qcircuit @C=1em @R=.7em @! R {
      \multiprepareC{3}{\Psi_{i_1}}&\qw\poloFantasmaCn{\sA}&\multigate{1}{\Transformation{A}_{i_2}}&\qw\poloFantasmaCn{\sB}&\gate{\Transformation{C}_{i_4}}&\qw\poloFantasmaCn{\sC}&\multigate{1}{\Transformation{E}_{i_6}}&\qw\poloFantasmaCn{\sD}&\multimeasureD{2}{G_{i_8}}\\
      \pureghost{\Psi_{i_1}}&\qw\poloFantasmaCn{\sys E}&\ghost{\Transformation{A}_{i_2}}&\qw\poloFantasmaCn{\sys F}&\multigate{1}{\Transformation{D}_{i_5}}&\qw\poloFantasmaCn{\sys G}&\ghost{\Transformation{E}_{i_6}}&&\pureghost{G_{i_8}}\\
      \pureghost{\Psi_{i_1}}&\qw\poloFantasmaCn{\sys H}&\multigate{1}{\Transformation{B}_{i_3}}&\qw\poloFantasmaCn{\sys L}&\ghost{\Transformation{D}_{i_5}}&\qw\poloFantasmaCn{\sys M}&\multigate{1}{\Transformation{F}_{i_7}}&\qw\poloFantasmaCn{\sys N}&\ghost{G_{i_8}}\\
      \pureghost{\Psi_{i_1}}&\qw\poloFantasmaCn{\sys O}&\ghost{\Transformation{B}_{i_3}}&\qw\poloFantasmaCn{\sys P}&\qw&\qw &\ghost{\Transformation{F}_{i_7}}\\
    }
    \end{aligned}
  \end{equation*}
  \caption{ The closed circuit in the figure represent the joint probability \(
    \ProbCond{i_1,i_2,\dots i_8}{\bm\Psi,\Test A,\dots,\bm G} \) of outcomes $i_1,i_2,\dots i_8$
    conditioned by the choice of tests ${\bm\Psi,\Test A,\dots,\bm G}$. Since the output of the event \(
    \Transformation{A}_{i_2} \) is connected to the input of the event \( \Transformation{D}_{i_5} \)
    through the system \( \sys F \), the event \( \Transformation{A}_{i_2} \) immediately precedes the
    event \( \Transformation{D}_{i_5} \) (\( \Transformation{A}_{i_2} \prec_1 \Transformation{D}_{i_5}
    \)).  Similarly, since between the event \( \Transformation{B}_{i_3} \) and the event \(
    \Transformation{E}_{i_6} \) there is \( \Transformation{D}_{i_5} \) such that \(
    \Transformation{B}_{i_3} \prec_1 \Transformation{D}_{i_5}\prec_1 \Transformation{E}_{i_6}  \), the
    event \( \Transformation{B}_{i_3} \) precedes the event \( \Transformation{E}_{i_6} \) (\(
    \Transformation{B}_{i_3} \prec \Transformation{E}_{i_6} \)). If the closed circuit of the figure
    belongs to a causal theory, we have e.g.~that the marginal probability of the event \(
    \Transformation{D}_{i_5}\in\Test{D} \) cannot depend on the choice of any test \( \Test{X} \) such
    that \( \Test{X} \not\prec \Test{D} \), i.e.~\( \ProbCond{ i_5 }{ \bm\Psi, \Test{A}, \Test{B},
    \Test{C}, \Test{D}, \Test{E}, \Test{F}, \bm G } = \ProbCond{ i_5 }{ \bm\Psi,\Test{A}, \Test{B} } \).
    \label{fig:closed-circuit}}
\end{figure}
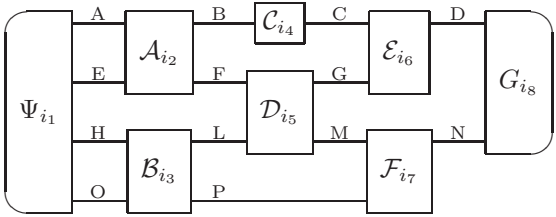
A circuit is \emph{closed} if its overall input and output systems are the trivial ones. Figure
\ref{fig:closed-circuit} is an example of closed circuit.  Given a circuit we say that an event
$\Transformation{H}$ {\em is immediately connected to the input of} $\Transformation{K}$, and write
$\Transformation{H}\prec_1\Transformation{K}$, if there is an output system of $\Transformation{H}$
that is connected with an input system of $\Transformation{K}$; e.g.~referring to the circuit in
Fig.~\ref{fig:closed-circuit} \( \Transformation{A}_{i_2} \prec_1 \Transformation{D}_{i_5} \). We
can moreover introduce the transitive closure $\prec$ of the relation $\prec_1$, and we say that
$\Transformation{H}$ {\em is connected to the input of} $\Transformation{K}$ if
$\Transformation{H}\prec\Transformation{K}$ (e.g.~\( \Transformation{B}_{i_3} \prec
\Transformation{E}_{i_6} \). The two relations \( \prec_1 \) and \( \prec \) can be trivially
extended from events to tests.

A theory is {\em probabilistic} if every closed circuit represents a probability distribution;
e.g.~the closed circuit in Fig.~\ref{fig:closed-circuit} represents the probability \(
\ProbCond{i_1,i_2,\dots i_8}{\bm\Psi,\Test A,\dots,\bm G} \) of outcomes $i_1,i_2,\dots i_8$
conditioned by the choice of tests ${\bm\Psi,\Test A,\dots,\bm G}$ \footnote{To be more precise the
definition of probabilistic theory includes also the following formal rule for the composition of
events of trivial systems $p_i\otimes p_j:=p_i p_j=:p_i\circ p_j $, stating the independence of
closed circuits.}. In probabilistic theories we can quotient the set of preparation-events of $\sA$
by the equivalence relation \( \State{\rho}_\sA\sim\State{\sigma}_\sA \Leftrightarrow \) \emph{the
probability of preparing} \( \State{\rho}_\sA \) \emph{and measuring} \( \Effect{c}_\sA \) \emph{is
the same as preparing} \( \State{\sigma}_\sA \) \emph{and measuring} \( \Effect{c}_\sA \) \emph{for
every observation-event} \( \Effect{c}_\sA \) \emph{of} \( \sA \) (and similarly for
observation-events).  The equivalence classes of preparation-events and observation-events of \(
\sA\) will be denoted by the same symbols as their elements $\State{\rho}_\sA$ and $\Effect{c}_\sA$,
respectively, and will be called \emph{state} $\State{\rho}_\sA$ for system $\sA$, and \emph{effect}
$\Effect{c}_\sA$ for system $\sA$.  For every system $\sA$, we will denote by $\SetStates(\sA)$,
$\SetEffects(\sA)$ the sets of states and effects, respectively.  States and effects are real-valued
functionals on each other, and then they can be naturally embedded in reciprocally dual real vector
spaces, \( \SetStates_{\mathbb{R}}(\sA) \) and \( \SetEffects_{\mathbb{R}}(\sA) \), whose dimension
$D_\sA$ is assumed here to be finite.  The application of the effect $\Effect{c_i}_\sA$ on the state
$\State{\rho}_\sA$ is written as $\RBraKet{ c_i }{ \rho }_\sA$ and corresponds to the closed circuit
$\Qcircuit @C=.5em @R=.5em { \prepareC{\rho} & \ustickcool{\sA} \qw & \measureD{ c_i } } $, denoting
therefore the probability of the $i$-th outcome of the observation-test $\bm
c=\{\Effect{c_i}_\sA\}_{i\in\eta}$ performed on the state $\rho$ of system $\sA$, i.e.~$ \RBraKet{
c_i }{ \rho }_\sA :=\ProbCond{c_i}{\bm\rho}$.

Any event with input system $\sA$ and output system $\sB$ induces a collection of linear mappings
from \( \SetStates_{\mathbb R}(\sA\sC) \) to \( \SetStates_{\mathbb R}(\sB\sC) \), for varying
system $\sC$. Such a collection is called {\em transformation} from $\sA$ to $\sB$. The set of
transformations from $\sA$ to $\sB$ will be denoted by $\SetTransf(\sA,\sB)$, and its linear span by
$\SetTransf_\mathbb{R}(\sA,\sB)$. The symbols \( \Transformation{A} \) and $ \Qcircuit @C=.5em
@R=.5em { & \ustickcool{\sA} \qw & \gate{\Transformation{A}} & \ustickcool{\sB} \qw &\qw} $ denoting
the event $\Transformation{A}$ will be also used to represent the corresponding transformation.

We now introduce a precise notion of determinism through the following definition
\cite{QUIT-ProbTheories}
\begin{definition}[ODT]\label{def:deterministic-OPT}
  An {\em Operational Deterministic Theory} (ODT) is an OPT with all closed circuits having
  probabilities 0 or 1.
\end{definition}
One cannot forbid the construction of the ``statistical'' version of an ODT (as it happens for
classical mechanics) by considering the OPT which is the convex closure of the ODT. 

Given a set $\mathrm{S}$ the convex cone $\lambda\mathrm{S}$ is the conic hull of $\mathrm{S}$,
namely the set of all conic combinations of elements of \(\mathrm{S} \).  With obvious notation we
have the cones $\lambda\SetStates(\sA)$, $\lambda\SetEffects(\sA)$, and
$\lambda\SetTransf(\sA,\sB)$.  The elements on the extremal rays of the cones are called {\em
atomic}. In the following, we will use the Greek letters to denote states and Latin letters to
denote effects.  Moreover, in the rest of the paper we will not specify the system when it is clear
from the context or it is generic.

An event $\Transformation{A}$ is {\em deterministic} if it belongs to a singleton test.  We will
denote respectively with \( \SetStates_1(\sA) \) , \( \SetEffects_1(\sA) \) and \(
\SetTransf_1(\sA,\sB) \) the set of deterministic states, effects and transformations for systems
$\sA$ and $\sB$, and we will often use the symbols \( \State{\varepsilon} \) and \( \Effect{e} \) to
refer respectively to a deterministic state and effect.  Note that in convex OPTs the sets \(
\SetStates_1(\sA) \) and \( \SetEffects_1(\sA) \) are convex. Deterministic transformations are also
called {\em channels}. 

Among the properties of OPTs, a relevant one is \emph{Local Discriminability}
\cite{QUIT-ProbTheories}, namely the possibility to discriminate multipartite states only through
local measurement on the subsystems:
\begin{definition}[Local Discriminability]\label{def:local-discriminability} 
  If $\State{\rho}_{\sA\sB}, \State{\sigma}_{\sA\sB} \in \SetStates_1(\sA\sB)$ are states and
  $\State{\rho}_{\sA\sB} \neq \State{\sigma}_{\sA\sB}$, then there are two effects $\Effect{a}_\sA \in
  \SetEffects(\sA)$ and $\Effect{b}_\sB \in \SetEffects(\sB)$ such that 
  \begin{equation*}\label{eq:local-discriminability}
    \begin{aligned} 
    \Qcircuit @C=1em @R=.7em @! R {\multiprepareC{1}{\rho}& \qw \poloFantasmaCn
      \sA &\measureD a \\
      \pureghost\rho & \qw \poloFantasmaCn \sB &\measureD b}
    \end{aligned}
    ~\neq~
    \begin{aligned}
    \Qcircuit @C=1em @R=.7em @! R {\multiprepareC{1}{\sigma}& \qw
      \poloFantasmaCn \sA &\measureD a \\
      \pureghost\sigma & \qw \poloFantasmaCn \sB &\measureD b}
    \end{aligned}.
  \end{equation*}
\end{definition}
Local Discriminability is equivalent to $\SetStates_\mathbb{R}(\sA\sB) = \SetStates_\mathbb{R}(\sA)
\otimes \SetStates_\mathbb{R}(\sB)$ \cite{QUIT-Arxiv}, where now the symbol $\otimes$ denotes the
usual tensor product of linear spaces.  The analog condition also holds for the effects.  An
important consequence of Local Discriminability is that a transformation $ \Transformation{T} \in
\SetTransf( \sA,\sB )$ is completely specified by its action on $\SetStates(\sA)$
\cite{QUIT-ProbTheories}:
\begin{equation*}
  \Transformation{C} \State{\rho} = \Transformation{C^\prime} \State{\rho} \quad \forall \State{\rho}
  \in\SetStates(\sA) \ \Rightarrow\ \Transformation{C} = \Transformation{C^\prime}.
\end{equation*}

We now introduce the definition of causality \cite{QUIT-Arxiv}.
\begin{definition}[Causal OPT]\label{def:causality}
  An OPT is \emph{causal} if the probability for every preparation-test $\bm
  \rho=\{\State{\rho_i}\}_{i \in \eta}$ and any two observation-tests $\bm a=\{\Effect{a_j}\}_{j \in
  \chi}$ and $\bm b=\{\Effect{b_j}\}_{j \in \xi}$ one has \( \sum_{j \in \chi} \RBraKet{a_j}{\rho_i} =
  \sum_{k \in \xi} \RBraKet{b_k} {\rho_i},\forall i\in\eta\), namely the probability of the
  preparation is independent of the choice of observation.
\end{definition}
Causality is equivalent to \emph{no backward signaling}
\cite{PhilosophyQuantumInformationEntanglement2}, namely within a closed circuit, the marginal
probability of outcomes for a given test $\Test{H}$ do not depend on the choice of any test $\Test
K$ not connected to the input of $\Test H$, i.e.~$\Test K\not\prec\Test H$.  For example, in the
circuit of Fig.~\ref{fig:closed-circuit} causality implies that
\begin{equation*}
  \ProbCond{ i_5 }{ \bm\Psi, \Test{A}, \Test{B}, \Test{C}, \Test{D}, \Test{E}, \Test{F}, \bm G } =
  \ProbCond{ i_5 }{ \bm\Psi,\Test{A}, \Test{B} } 
\end{equation*}
The present notion of causality is nothing but a rigorous definition of the so-called {\em Einstein
causality}.  Indeed, a corollary of \emph{no backward signaling} is the {\em no-signaling without
interaction} \cite{QUIT-ProbTheories}. A crucial equivalent condition for causality of an OPT is the
uniqueness of the deterministic effect \cite{QUIT-ProbTheories}.

  The possibility of reversing the causal arrow (by defining \emph{backward causality} or
  \emph{retro-causality} as independence of observation on preparation) does not add anything new
  conceptually, since there is an isomorphism between any retro-causal theory and a causal one, upon
  exchanging the roles of input and output.

In the following we will take Local Discriminability for granted. We say that a linear map \(
\Transformation{T}\in\SetTransf_{\mathbb R}(\sA,\sB) \) is {\em admissible} if it locally preserves
the set of states \( \SetStates(\sA\sC) \), namely \(
\Transformation{T}\otimes\Transformation{I}_{\sC}(\SetStates(\sA\sC)) \subseteq \SetStates(\sB\sC)
\). In the following we will assume that every admissible map actually belongs to \(
\SetTransf(\sA,\sB) \).  We will refer to this last assumption as \emph{No-Restriction Hypothesis}
\footnote{In previous literature \cite{QUIT-ProbTheories} the same nomenclature has been used for
  the cone duality $\lambda\SetEffects(\sA)= \lambda\SetStates(\sA)^*$, which is a different
  concept.}.

\section{The deterministic noncausal theory}\label{uffa}
We now introduce an example of non-causal deterministic theory. The systems will be denoted by the
symbols \( \sn\triangleright\sm \), where \( \sn \), \( \sm \) are positive integer numbers, and
they enjoys the property that \( \dim\SetStates_\mathbb{R}(\sn\triangleright\sm) =
\dim\SetEffects_\mathbb{R}(\sn\triangleright\sm) = n \cdot m \). Composition of systems is defined
as \((\sn\triangleright\sm)(\sn^\prime\triangleright\sm^\prime):=\sys{x}\triangleright\sys{y}\),
where \( x = n \cdot n^\prime \) and \( y = m \cdot m^\prime \), consistently with Local
Discriminability.  Notice that this definition is consistent with
associativity and commutativity of parallel composition, as well as
the existence of a trivial system $\sI:=(\sn\triangleright\sm)$ with $n=m=1$. 

Denote by \( \Gamma_n \) the set of all the non-negative integer numbers less than $n$, i.e.~\(
\Gamma_n:=\{0,\ldots,n-1\} \).  The set of states of the system \( \sn\triangleright\sm \) is
defined as \( \SetStates(\sn\triangleright\sm) := \{ \SState{\alpha}[][][f][\Xi] \mid f:
\Xi\to\Gamma_m \text{ and }\Xi \subseteq \Gamma_n\} \). The atomic states of \(
\SetStates(\sn\triangleright\sm) \) are the elements \( \SState{\alpha}[][][f][\{ i \}] \) with \(
f: \{i\}\to\Gamma_m \), \( i\in\Gamma_n \). In the following we will use a special notation for the
atomic states: \(\SState{\alpha}[i][j] := \SState{\alpha}[][][f][\{ i \}] \) with \( f(i)=j \).  The
number of different atomic states for \( \sn\triangleright\sm \) is \( n \cdot m \), i.e.~the same
as the dimension of \( \SetStates_\mathbb{R}(\sn\triangleright\sm) \).  For \(
\Xi,\Upsilon\subset\Gamma_n \) with \( \Xi \cap \Upsilon = \emptyset \), the states of \(
\sn\triangleright\sm \) enjoy the property \( \SState{\alpha}[][][f][\Xi] +
\SState{\alpha}[][][g][\Upsilon] \equiv \SState{\alpha}[][][h][\Xi\cup\Upsilon] \), with \( h:\ \Xi
\cup \Upsilon \to \Gamma_m \), $ h(i):= f(i)$ for $i\in\Xi$, and $ h(i):= g(i)$ for $i\in\Upsilon$.
Notice that for \( \Xi \cap \Upsilon \ne \emptyset \), \( \SState{\alpha}[][][f][\Xi] +
\SState{\alpha}[][][g][\Upsilon] \) is not a valid state. We have that a deterministic state is an
element \( \SState{\varepsilon}[][][f] := \SState{\alpha}[][][f][\Gamma_n] \), hence the set of the
deterministic states is \( \SetStates_1(\sn\triangleright\sm) = \{ \SState{\varepsilon}[][][f],\ f:\
\Gamma_n\to\Gamma_m \} \).

The set of states \( \SetStates(\sys x\triangleright\sys y) \) for the bipartite system \(
\sys{x}\triangleright\sys{y} =(\sn\triangleright\sm)(\sn^\prime\triangleright\sm^\prime)\) is built
up via the definition of bipartite atomic states as parallel composition of single-system atomic
states \(\SState{\alpha}[(s,s^\prime)][(t,t^\prime)]:=
\SState{\alpha}[s][t]\otimes\SState{\alpha}[s^\prime][t^\prime] \), with $\Gamma_x:=
\Gamma_n\times\Gamma_{n^\prime}$ and $\Gamma_y:=\Gamma_m\times\Gamma_{m^\prime}$. It can be 
shown that this is the only possible definition of atomic state consistent with Local Discriminability
(see Props.~1,
and 2
in the Appendix).

Under the No-Restriction Hypothesis we can easily build the set of effects for the system \(
\sn\triangleright\sm \) from the set \( \SetStates(\sn\triangleright\sm) \). The atomic effects are
the elements \( \SEffect{a}[s][s^\prime] \) such that \(
\RBraKet{\NSEffect{a}[s][s^\prime]}{\NSState{\alpha}[t][t^\prime]} = \delta_{st} \delta_{s^\prime
  t^\prime} \) (see Prop.~4
in the Appendix).
In general, it can be shown that \( \SetEffects(\sn\triangleright\sm) := \{ \SEffect{a}[][][v][E]
\mid v \in \Gamma_n \text{ and } E \subseteq\Gamma_m \} \), using the definition \(
\SEffect{a}[][][v][E \cup F]:= \SEffect{a}[][][v][E] + \SEffect{a}[][][v][F]\) for \( E \cap F =
\emptyset \) (see Prop.~5
in the Appendix). The atomic effects are \( \SEffect{a}[][][s][\{ s^\prime
\}]\equiv\SEffect{a}[s][s^\prime] \).  The deterministic effects are the elements \( \SEffect{e}[v]
:= \SEffect{a}[][][v][\Gamma_m] \), and one can verify that \(
\RBraKet{\NSEffect{e}[][][v]}{\NSState{\varepsilon}[][][f]} = 1 \) for every \(
\SState{\varepsilon}[][][f]\in\SetStates_1(\sn\triangleright\sm) \). Indeed, one can check that \(
\RBraKet{ \NSEffect{a}[][][v][E] }{ \NSState{\alpha}[][][f][\Xi] } := \chi_\Xi(v) \chi_E(f(v)) \),
with \( \chi_S \) the indicator function of the set \( S \), showing that for \( E = \Gamma_m \), \(
\Xi = \Gamma_n \)---i.e.~for deterministic states and effects---\(
\RBraKet{\NSEffect{e}[][][v]}{\NSState{\varepsilon}[][][f]} = \RBraKet{
  \NSEffect{a}[][][v][\Gamma_m] }{ \NSState{\alpha}[][][f][\Gamma_n] } = 1 \). Notice that for a
generic system \( \sn\triangleright\sm \) there are $n$ different deterministic effects; since an
OPT is causal if and only if for every system there is just a single deterministic effect
\cite{QUIT-ProbTheories}, we conclude that the presented theory is non-causal.

To complete the theory, we need to specify all possible transformations. The set of transformations
\( \SetTransf(\sn\triangleright\sm,\sp\triangleright\sq) \) is built up starting from the
atomic elements \( \ExtrTransformation{s}{s^\prime}{t}{t^\prime} \) with \( (s,s^\prime,t,t^\prime) \in
\Gamma_n\times\Gamma_m\times\Gamma_p\times\Gamma_q \) defined as \(
\ExtrTransformation{s}{s^\prime}{t}{t^\prime} \SState{\alpha}[v][v^\prime] := \delta_s^v
\delta_{s^\prime}^{v^\prime} \SState{\alpha}[t][t^\prime] \) (see
Props.~6, 7, 8, and 9
in the Appendix).  The other transformations belonging to \(
\SetTransf(\sn\triangleright\sm,\sp\triangleright\sq) \) are the elements \(
\Transformation{T}^{f\,g}_{\Omega} := \sum_{(s^\prime,t) \in \Omega}
\ExtrTransformation{f(t)}{s^\prime}{t}{g(t,s^\prime)} \) with \( \Omega\subseteq
\Gamma_p\times\Gamma_m \), \( f: \Gamma_p\to\Gamma_n \), and \( g: \Gamma_p\times\Gamma_m\to\Gamma_q
\) (see Props.~10, and 11
in the Appendix).  Notice that \( \ExtrTransformation{s}{s^\prime}{t}{t^\prime} \equiv
\Transformation{T}^{f\,g}_{\{ s^\prime \}\times\{ t \}} \) with \( f(t) = s \) and \( g(t,s^\prime)
= t^\prime \). The channels from \( \sn\triangleright\sm \) to \( \sp\triangleright\sq \) are the
elements \( \Transformation{T}^{f\,g} := \Transformation{T}^{f\,g}_{\Gamma_n\times\Gamma_p} \).
This completes the construction of the full theory, which is deterministic and non causal.

\bigskip

We can give now an explicit example which shows the non-causal features of the presented theory. Let
us consider a simple case with the system \( 2\triangleright2 \) and the experimenter Alice. Alice
wants to prepare the system \( 2\triangleright2 \) by means of the preparation test \( \{
\SState{\alpha}[][][f][\Xi_i] \}_{i=0,1} \), with \( \Xi_i := \{ i \} \) for \( i=0,1 \), and \( f
\) arbitrary function from \( \Gamma_2 \) to \( \Gamma_2 \). She subsequently measures the system
chosing one observation test between \( \Test{D}_0 := \{ \SEffect{a}[][][0][\Xi_i] \}_{i=0,1} \) and
\( \Test{D}_1 := \{ \SEffect{a}[][][1][\Xi_i] \}_{i=0,1} \). It can be easily seen that the
probability of preparing the state \( \SState{\alpha}[][][f][\Xi_i] \) depends on which observation
Alice wants to perform. Indeed,
\begin{align*}
  & \ProbCond{\NSState{\alpha}[][][f][\Xi_0]}{\Test{D}_0} = \RBraKet{ \NSEffect{a}[][][0][\Xi_0] }{
    \NSState{\alpha}[][][f][\Xi_0] } + \RBraKet{ \NSEffect{a}[][][0][\Xi_1] }{
    \NSState{\alpha}[][][f][\Xi_0] } = \\
  &= \RBraKet{ \NSEffect{e}[0] }{ \NSEffect{\alpha}[][][f][\Xi_0] } =  1,\\
  & \ProbCond{\NSState{\alpha}[][][f][\Xi_0]}{\Test{D}_1} = \RBraKet{ \NSEffect{a}[][][1][\Xi_0] }{
    \NSState{\alpha}[][][f][\Xi_0] } + \RBraKet{ \NSEffect{a}[][][1][\Xi_1] }{
    \NSState{\alpha}[][][f][\Xi_0] } = \\
  &= \RBraKet{ \NSEffect{e}[1] }{ \NSEffect{\alpha}[][][f][\Xi_0] } =  0,
\end{align*}
and similarly for the state \( \SState{\alpha}[][][f][\Xi_1] \). 

We can moreover show how this deterministic non-causal theory violates the no-signalling without
interaction, i.e.~by means of a bipartite deterministic state an experimenter Bob can communicate
with Alice just with local measurements on his own subsystem. Let us suppose that both the systems
of Alice and Bob are \( 2\triangleright2 \), and that they share the bipartite deterministic state
\( \SState{\varepsilon} \in \SetStates(4\triangleright4) \).  Keeping the same notation of the
previous example, let us suppose that Bob can perform the two observation-test \( \Test{D}_0 \), \(
\Test{D}_1 \). It can be easily shown that, unlike in Quantum Theory, if \(
\SState{\varepsilon}_{AB} \) is properly chosen the state Alice sees in her subsystem without
knowing the outcome of the measurement performed by Bob (the so-called marginal state of Alice),
will depend on the choice made by Bob. In this way Alice performing a local observation on her own
subsystem can assess the choice of the measurement made on the other subsystem, getting therefore
information from Bob.  Indeed, if Bob performs the test \( \Test{D}_0 \) the marginal state of Alice
will be 
\begin{equation*}
  \begin{aligned} 
    \Qcircuit @C=1em @R=.7em @! R {\multiprepareC{1}{\varepsilon} & \qw
        \poloFantasmaCn{2\triangleright2} & \qw \\
      \pureghost\varepsilon & \qw \poloFantasmaCn{2\triangleright2} &
        \measureD{\NSEffect{a}[][][0][\Xi_0]}}
  \end{aligned} 
  + 
  \begin{aligned} 
    \Qcircuit @C=1em @R=.7em @! R {\multiprepareC{1}{\varepsilon}& \qw
        \poloFantasmaCn{2\triangleright2} & \qw \\
      \pureghost\varepsilon & \qw \poloFantasmaCn{2\triangleright2}
        & \measureD{\NSEffect{a}[][][0][\Xi_1]}}
  \end{aligned} 
  \equiv
  \begin{aligned} 
    \Qcircuit @C=1em @R=.7em @! R {\multiprepareC{1}{\varepsilon}& \qw
        \poloFantasmaCn{2\triangleright2} & \qw \\
      \pureghost\varepsilon & \qw \poloFantasmaCn{2\triangleright2}
        & \measureD{\NSEffect{e}[0]}}.
  \end{aligned} 
\end{equation*}
Let us choose as deterministic bipartite input the state \( \SState{\varepsilon} := \sum_{st} 
\SState{\alpha}[s][t]\otimes\SState{\alpha}[t][s] \). We have that the marginal state
of Alice when Bob performs the test \( \Test{D}_0 \) is 
\begin{align*}
  & \SEffect{e}[0]_B \SState{\varepsilon}_{AB} = \sum_{st} \SState{\alpha}[s][t] \otimes \RBraKet{ \NSEffect{e}[0] }{
    \NSState{\alpha}[t][s] } = \\
  & = \sum_{st} \chi_{\Xi_t}(0)\
    \chi_{\Gamma_2}(s)\ \SState{\alpha}[s][t] = \sum_{s}
    \SState{\alpha}[s][0],
\end{align*}
namely the deterministic state \( \SState{\varepsilon}[h_0] \in \SetStates(2\triangleright2) \) where \( h_0 \)
is the function such that \( h_0(x):=0 \) \( \forall x \in \Gamma_2 \). Similarly, the marginal
state of Alice when Bob performs the test \( \Test{D}_1 \) is \( \SState{\varepsilon}[h_1] \)---with
\( h_1(x):=1 \) \(\forall x\in\Gamma_2\). Alice can distinguish between the two marginal states \(
\SState{\varepsilon}[h_0] \), \( \SState{\varepsilon}[h_1] \) by means of the test \( \Test{D}_0 \),
assessing the choice of Bob.

The presented deterministic non-causal theory can also be built in a constructive way
\cite{preparation}. It is done in two steps. The first one consists in building a non-causal OPT
through the addition of a non-causal shell around an internal causal core corresponding to the
classical OPT, thanks to a construction analogous to that of quantum combs in the case of quantum
theory \cite{supermaps, comblong}.  Then the resulting two-shell theory is constrained to be
deterministic.An interesting result is that every transformation of the probabilistic non-causal
core+shell theory can be implemented just using elements of the core causal theory
\cite{preparation}.

\section{Discussion}
In summary, in this paper we have established the logical independence of the two notions of
causality and determinism, which play a crucial role in physics, and stay at the core of the debate
about foundations of quantum theory and relativity. As a legacy of classical physics the two
concepts have been often merged into a single one which is unfit to quantum theory, thus leading to
misconceptions. Here by determinism we simply mean that the probabilities of all events are either 0
or 1, whereas for causality we mean the usual Einstein's notion, namely no-signalling from the
future. We have proved that not only the two notions are formally independent, but also this
distinction is not vacuous, since there are indeed both counterexamples of a theory which is causal
and non deterministic and, reversely, a theory which is deterministic and non causal. Quantum theory
provides the first example, while the second one has been introduced in the present paper
(retro-causal deterministic theories are not an good example of non-causal deterministic theory, due
to the mentioned isomorphism between causal and retro-causal theories).

In conclusion of this paper, we want to comment about the relation between the notion of causality
in Definition \ref{def:causality} and the usual cause-effect relation and/or the physical causation
in the philosophical literature, e. g. as in Refs. \cite{salmon1998cauality,dowe2007physical}. After
centuries of debates it may be said with a degree confidence that an empirical notion of
causality/causation is missing, and in all cases the cause-effect connection is of conterfactual
nature. Causality should be always regarded as meaningless outside a theory. Within a theory
Definition \ref{def:causality} is the minimal requirement for the use of the term causality, and only if a
theory is causal it makes sense to identify cause and effect, whereas in a non causal theory the two
words are nonsensical. Obviously within Definition \ref{def:causality} the preparation plays the
role of the cause and the observation that of the effect. Finally, we want to stress that our definition
is exactly the Einsteinian one in special relativity theory.

\section*{Appendix}
In this Appendix we present all the technical results which ensure the consistency of the non-causal
deterministic theory presented in Sect. \ref{uffa}.

\begin{proposition}\label{prop:otimes-is-atomic}
  If Local Discriminability holds, the parallel composition of two atomic transformations is atomic.
\end{proposition}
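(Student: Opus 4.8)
The plan is to reduce the claim to a statement about extremal rays of cones living in a tensor-product vector space, using Local Discriminability to supply the tensor factorization, and then to settle that statement by ``slicing'' any hypothetical decomposition with local preparations and observations. Throughout write $\Transformation{A}\in\SetTransf(\sA,\sB)$ and $\Transformation{B}\in\SetTransf(\sC,\sD)$ for the two atomic transformations, so that $\Transformation{A}\otimes\Transformation{B}\in\SetTransf(\sA\sC,\sB\sD)$, and suppose $\Transformation{A}\otimes\Transformation{B}=\Transformation{C}+\Transformation{D}$ with $\Transformation{C},\Transformation{D}\in\lambda\SetTransf(\sA\sC,\sB\sD)$; the goal is to prove that $\Transformation{C}=\mu\,(\Transformation{A}\otimes\Transformation{B})$ for some $\mu\ge 0$.

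First I would set up the linear-algebra arena. Since a transformation is determined by its action on states, $\SetTransf_{\mathbb R}(\sA\sC,\sB\sD)$ sits inside $\SetStates_{\mathbb R}(\sB\sD)\otimes\SetEffects_{\mathbb R}(\sA\sC)$; applying Local Discriminability to $\SetStates_{\mathbb R}$ and to $\SetEffects_{\mathbb R}$ factorizes this as $W_1\otimes W_2$, where $W_1:=\SetStates_{\mathbb R}(\sB)\otimes\SetEffects_{\mathbb R}(\sA)\supseteq\SetTransf_{\mathbb R}(\sA,\sB)$ and $W_2:=\SetStates_{\mathbb R}(\sD)\otimes\SetEffects_{\mathbb R}(\sC)\supseteq\SetTransf_{\mathbb R}(\sC,\sD)$, and under this identification parallel composition of transformations becomes the tensor product of linear maps, so $\Transformation{A}\otimes\Transformation{B}$ really is the simple tensor of $\Transformation{A}\in W_1$ and $\Transformation{B}\in W_2$. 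Dualizing, $W_2^{*}=\SetEffects_{\mathbb R}(\sD)\otimes\SetStates_{\mathbb R}(\sC)$ is spanned by the product functionals $\Effect{b}\otimes\State{\sigma}$ with $\Effect{b}\in\SetEffects(\sD)$ and $\State{\sigma}\in\SetStates(\sC)$, and pairing such a functional with the $W_2$-leg of any $\Transformation{X}\in\lambda\SetTransf(\sA\sC,\sB\sD)$ produces precisely the transformation $\Transformation{X}_{\sigma,b}$ obtained by feeding $\State{\sigma}$ into the $\sC$-input of $\Transformation{X}$ and reading off $\Effect{b}$ at the $\sD$-output; this $\Transformation{X}_{\sigma,b}$ is a legitimate element of $\lambda\SetTransf(\sA,\sB)$ because $\SetTransf$ is closed under sequential and parallel composition and states and effects are themselves transformations to and from $\sI$. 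The symmetric discussion applies to $W_1^{*}$.

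The core step is then short. Applying the slicing $\Transformation{X}\mapsto\Transformation{X}_{\sigma,b}$ to $\Transformation{A}\otimes\Transformation{B}=\Transformation{C}+\Transformation{D}$ gives $\RBraKet{b}{\Transformation{B}\sigma}\,\Transformation{A}=\Transformation{C}_{\sigma,b}+\Transformation{D}_{\sigma,b}$ with both summands in $\lambda\SetTransf(\sA,\sB)$, so atomicity of $\Transformation{A}$ forces $\Transformation{C}_{\sigma,b}\in\mathbb R_{\ge 0}\,\Transformation{A}$. Since the functionals $\Effect{b}\otimes\State{\sigma}$ span $W_2^{*}$ and the pairing is linear in the functional, \emph{every} contraction of the $W_2$-leg of $\Transformation{C}$ lands on the line through $\Transformation{A}$; choosing a basis of $W_1$ containing $\Transformation{A}$, this is equivalent to $\Transformation{C}=\Transformation{A}\otimes\Transformation{G}$ for some $\Transformation{G}\in W_2$. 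The symmetric slicing gives $\Transformation{C}=\Transformation{H}\otimes\Transformation{B}$ for some $\Transformation{H}\in W_1$; equating these two rank-one forms yields $\Transformation{C}=\mu\,(\Transformation{A}\otimes\Transformation{B})$, and contracting $\Transformation{C}$ with a suitable functional built from local states and effects (one strictly positive on $\Transformation{A}$ and on $\Transformation{B}$, which exists since atomic transformations are nonzero) shows $\mu\ge 0$. Hence $\Transformation{A}\otimes\Transformation{B}$ spans an extremal ray of $\lambda\SetTransf(\sA\sC,\sB\sD)$, i.e.\ it is atomic.

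The main obstacle is not this main line but the degenerate slices: when $\RBraKet{b}{\Transformation{B}\sigma}=0$, atomicity of $\Transformation{A}$ only yields $\Transformation{C}_{\sigma,b}+\Transformation{D}_{\sigma,b}=0$, and to conclude that each term vanishes --- so that it still lies on the line through $\Transformation{A}$, as the linearity step needs --- one must know that the cone $\lambda\SetTransf(\sA,\sB)$ is pointed. This again follows from Local Discriminability: the state cone $\lambda\SetStates(\sB)$ is pointed because effects separate states, so if both $\Transformation{T}$ and $-\Transformation{T}$ belonged to $\lambda\SetTransf(\sA,\sB)$ then $\Transformation{T}$ would send every state $\State{\rho}$ to an element $\omega$ with $\pm\omega\in\lambda\SetStates(\sB)$, forcing $\omega=0$ for all $\State{\rho}$ and hence $\Transformation{T}=0$ by the fact that a transformation is determined by its action on states. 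The remaining ingredients --- closure of $\SetTransf$ under composition, parallel composition being the linear tensor product, the product functionals spanning $W_1^{*}$ and $W_2^{*}$, and the rank-one comparison $\Transformation{A}\otimes\Transformation{G}=\Transformation{H}\otimes\Transformation{B}$ --- are routine.
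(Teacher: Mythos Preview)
Your proof is correct and follows essentially the same strategy as the paper: slice $\Transformation{A}\otimes\Transformation{B}=\Transformation{C}+\Transformation{D}$ with local preparations and observations, use atomicity of $\Transformation{A}$ (resp.\ $\Transformation{B}$) on each slice, and then pass from the local conclusions back to the bipartite transformation via Local Discriminability. The only differences are cosmetic: the paper works at the level of scalars $\Effect{a}\Transformation{C}(\State{\alpha}\otimes\State{\beta})\Effect{b}$ and extracts the proportionality constant by taking ratios $\lambda^{\Transformation C}_{a\alpha}/\Effect{a}\Transformation{A}\State{\alpha}=\mu^{\Transformation C}_{b\beta}/\Effect{b}\Transformation{B}\State{\beta}$, whereas you package the same information as the tensor-algebra statement ``all $W_2$-contractions of $\Transformation{C}$ lie on the line through $\Transformation{A}$, hence $\Transformation{C}=\Transformation{A}\otimes\Transformation{G}$''; and the paper handles the degenerate slices by assuming $\Transformation{C}\neq 0$ and deriving a contradiction from any vanishing coefficient, which is logically equivalent to your pointedness argument for $\lambda\SetTransf(\sA,\sB)$.
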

\begin{proof}
  Let \( \Transformation{A} \in \SetTransf(\sA,\sA^\prime) \), \(
  \Transformation{B} \in \SetTransf(\sB,\sB^\prime) \) be atomic
  transformations between systems. Let us consider the transformation
  \( \Transformation{A} \otimes \Transformation{B} \in
  \SetTransf(\sA\sB,\sA^\prime\sB^\prime) \). Let us suppose that \(
  \Transformation{A} \otimes \Transformation{B} \) can be decomposed
  as follows
  \begin{equation*}
    \begin{aligned}
    \begin{aligned}
      \Qcircuit @C=1em @R=.7em @! R { 
        & \ustickcool{\sA}\qw & \gate{\Transformation{A}} & \ustickcool{\sA^\prime}\qw &\qw \\
        & \ustickcool{\sB}\qw & \gate{\Transformation{B}} & \ustickcool{\sB^\prime}\qw &\qw }
    \end{aligned}
    =
    \begin{aligned}
      \Qcircuit @C=1em @R=.7em @! R { 
        & \ustickcool{\sA}\qw & \multigate{1}{\Transformation{C}} & \ustickcool{\sA^\prime}\qw &\qw \\
        & \ustickcool{\sB}\qw & \ghost{\Transformation{C}} & \ustickcool{\sB^\prime}\qw & \qw }
    \end{aligned}
    +
    \begin{aligned}
      \Qcircuit @C=1em @R=.7em @! R { 
        & \ustickcool{\sA}\qw & \multigate{1}{\Transformation{D}} & \ustickcool{\sA^\prime}\qw &\qw \\
        & \ustickcool{\sB}\qw & \ghost{\Transformation{D}} & \ustickcool{\sB^\prime}\qw &\qw }
    \end{aligned},
\end{aligned}
  \end{equation*}
  for a non trivial couple of transformations \( 0\neq\Transformation{C}, \Transformation{D} \in
  \SetTransf(\sA\sB,\sA^\prime\sB^\prime) \). For any state \( \State{\beta} \in \SetStates(\sB) \),
  and any effect \( \Effect{b} \in \SetEffects(\sB^\prime) \) such that \( \Effect{b}
  \Transformation{B} \State{\beta} \ne 0 \), we have
  \begin{align*}
    & \begin{aligned}
      \Qcircuit @C=1em @R=.7em @! R { 
        & \ustickcool{\sA}\qw & \gate{\Transformation{A}} & \ustickcool{\sA^\prime}\qw & \qw \\
        & \prepareC{\beta} & \gate{\Transformation{B}}  & \measureD{b} }
    \end{aligned}
    =
    \begin{aligned}
      \Qcircuit @C=1em @R=.7em @! R { 
        & \ustickcool{\sA}\qw & \multigate{1}{\Transformation{C}} & \ustickcool{\sA^\prime}\qw &\qw \\
        & \prepareC{\beta} & \ghost{\Transformation{C}} & \measureD{b} }
    \end{aligned}
    + \\ & +
    \begin{aligned}
      \Qcircuit @C=1em @R=.7em @! R { 
        & \ustickcool{\sA}\qw & \multigate{1}{\Transformation{D}} & \ustickcool{\sA^\prime}\qw &\qw \\
        & \prepareC{\beta} & \ghost{\Transformation{D}} & \measureD{b} }
    \end{aligned},
  \end{align*}
  Since the transformation \( \Transformation{A} \) is atomic we have that the transformations \(
  \Effect{b}_\sB \Transformation{C} \State{\beta}_{\sB^\prime}, \Effect{b}_\sB \Transformation{D}
  \State{\beta}_{\sB^\prime} \in \SetTransf(\sA,\sA^\prime)\ \) must be proportional to \(
  \Transformation{A} \); in particular for any state \( \State{\alpha} \in \SetStates(\sA) \), and any
  effect \( \Effect{a} \in \SetEffects(\sA^\prime) \) such that \( \Effect{a} \Transformation{A}
  \State{\alpha} \ne 0 \), it must be
  \begin{align}
    \label{eq:atom-A-C}
    \begin{aligned}
      \Qcircuit @C=1em @R=.7em @! R { 
        & \prepareC{\alpha} & \multigate{1}{\Transformation{C}} & \measureD{a} \\
        & \prepareC{\beta} & \ghost{\Transformation{C}} & \measureD{b} }
    \end{aligned}
    = 
    \mu_{b\beta}^{\Transformation{C}} \Effect{a}
    \Transformation{A} \State{\alpha},
    \\
    \label{eq:atom-A-D}
    \begin{aligned}
      \Qcircuit @C=1em @R=.7em @! R { 
        & \prepareC{\alpha} & \multigate{1}{\Transformation{D}} & \measureD{a} \\
        & \prepareC{\beta} & \ghost{\Transformation{D}} & \measureD{b} }
    \end{aligned}
    = 
    \mu_{b\beta}^{\Transformation{D}} \Effect{a}
    \Transformation{A} \State{\alpha}, 
  \end{align}
  where \( \mu_{b\beta}^{\Transformation{C}} \), \( \mu_{b\beta}^{\Transformation{D}} \) are 
  constants which can depend on the choice of \( \State{\beta} \) and \( \Effect{b} \).  One can repeat
  a similar argument on the other subsystem, getting:
  \begin{align}
    \label{eq:atom-B-C}
    \begin{aligned}
      \Qcircuit @C=1em @R=.7em @! R { 
        & \prepareC{\alpha} & \multigate{1}{\Transformation{C}} & \measureD{a} \\
        & \prepareC{\beta} & \ghost{\Transformation{C}} & \measureD{b} }
    \end{aligned}
    = 
    \lambda_{a\alpha}^{\Transformation{C}} \Effect{b}
    \Transformation{B} \State{\beta},
    \\
    \label{eq:atom-B-D}
    \begin{aligned}
      \Qcircuit @C=1em @R=.7em @! R { 
        & \prepareC{\alpha} & \multigate{1}{\Transformation{D}} & \measureD{a} \\
        & \prepareC{\beta} & \ghost{\Transformation{D}} & \measureD{b} }
    \end{aligned}
    = 
    \lambda_{a\alpha}^{\Transformation{D}} \Effect{b}
    \Transformation{B} \State{\beta}, 
  \end{align}
  where \( \lambda_{a\alpha}^{\Transformation{C}} \), \(
  \lambda_{a\alpha}^{\Transformation{D}} \) are constants which can
  depend on the choice of \( \State{\alpha} \) and \( \Effect{a} \).
  Let us now suppose that $\lambda_{a\alpha}^\Transformation{C}=0$.
  Then we have
  \begin{align*}
    \lambda_{a\alpha}^\Transformation{C}\Effect{b}\Transformation{B}\State{\beta}=
    \mu_{b\beta}^\Transformation{C}\Effect{a}\Transformation{A}\State{\alpha}=0,
  \end{align*}
  for all $\Effect b,\State\beta$. Since by hypothesis
  $\Effect{a}\Transformation{A}\State{\alpha}\neq0$, we have
  $\mu_{b\beta}^\Transformation{C}=0$ for all $\Effect b,\State\beta$, and finally
  this implies that
  \begin{equation*}
    \begin{aligned}
      \Qcircuit @C=1em @R=.7em @! R { 
        & \prepareC{\alpha} & \multigate{1}{\Transformation{C}} & \measureD{a} \\
        & \prepareC{\beta} & \ghost{\Transformation{C}} & \measureD{b} }
    \end{aligned}
    = 0,
  \end{equation*}
  for all $\Effect{a},\Effect{b},\State{\alpha},\State{\beta}$,
  namely, by Local Discriminability, $\Transformation C=0$, contrarily
  to the hypothesis. By similar arguments we can then prove that the
  coefficients $\lambda_{a\alpha}^{\Transformation{C}}$,
  $\lambda_{a\alpha}^{\Transformation{D}}$,
  $\mu_{b\beta}^{\Transformation{C}}$, and
  $\mu_{b\beta}^{\Transformation{D}}$ are all positive.

  Comparing Eq.~\eqref{eq:atom-A-C} with Eq.~\eqref{eq:atom-B-C}, and Eq.~\eqref{eq:atom-A-D} with
  Eq.~\eqref{eq:atom-A-D} one obtains:
  \begin{align*}
    & \frac{ \lambda_{a\alpha}^{\Transformation{C}} }{ \Effect{a}
      \Transformation{A} \State{\alpha} } = \frac{ \mu_{b\beta}^{\Transformation{C}}
      }{ \Effect{b} \Transformation{B} \State{\beta} }>0,
    & & \frac{ \lambda_{a\alpha}^{\Transformation{D}} }{ \Effect{a}
      \Transformation{A} \State{\alpha} } = \frac{ \mu_{b\beta}^{\Transformation{D}}
      }{ \Effect{b} \Transformation{B} \State{\beta} }>0.
  \end{align*}
  The previous relations show that all the ratios are independent of the choices of \(
  \State{\alpha} \), \( \State{\beta} \), \( \Effect{a} \), \( \Effect{b} \), i.e.~\(
  k^{\Transformation{C}} := { \lambda_{a\alpha}^{\Transformation{C}} } / { \Effect{a}
  \Transformation{A} \State{\alpha} } = { \mu_{b\beta}^{\Transformation{C}} } / { \Effect{b}
  \Transformation{B} \State{\beta} } \) and \( k^{\Transformation{D}} := {
  \lambda_{a\alpha}^{\Transformation{D}} } / { \Effect{a} \Transformation{A} \State{\alpha} } = {
  \mu_{b\beta}^{\Transformation{D}} } / { \Effect{b} \Transformation{B} \State{\beta} } \). Using
  these definitions for \( k^{\Transformation{C}} \) and \( k^{\Transformation{D}} \) in
  Eqs.~\eqref{eq:atom-A-C},\eqref{eq:atom-A-D} one gets 
  \begin{align*}
    \begin{aligned}
      \Qcircuit @C=1em @R=.7em @! R { 
        & \prepareC{\alpha} & \multigate{1}{\Transformation{C}} & \measureD{a} \\
        & \prepareC{\beta} & \ghost{\Transformation{C}} & \measureD{b} }
    \end{aligned}
    = k^{\Transformation{C}}
    \begin{aligned}
      \Qcircuit @C=1em @R=.7em @! R { 
        & \prepareC{\alpha} & \gate{\Transformation{A}} & \measureD{a} \\
        & \prepareC{\beta} & \gate{\Transformation{B}} & \measureD{b} }
    \end{aligned},\\
    \begin{aligned}
      \Qcircuit @C=1em @R=.7em @! R { 
        & \prepareC{\alpha} & \multigate{1}{\Transformation{C}} & \measureD{a} \\
        & \prepareC{\beta} & \ghost{\Transformation{C}} & \measureD{b} }
    \end{aligned}
    = k^{\Transformation{D}}
    \begin{aligned}
      \Qcircuit @C=1em @R=.7em @! R { 
        & \prepareC{\alpha} & \gate{\Transformation{A}} & \measureD{a} \\
        & \prepareC{\beta} & \gate{\Transformation{B}} & \measureD{b} }
    \end{aligned},
  \end{align*}
  for all \( \State{\alpha} \), \( \State{\beta} \), \( \Effect{a} \), \( \Effect{b} \). By Local
  Discriminability this implies \( k^{\Transformation{C}} \Transformation{A}\otimes\Transformation{B}
  = \Transformation{C} \), and \( k^{\Transformation{D}} \Transformation{A}\otimes\Transformation{B} =
  \Transformation{D} \), namely \( \Transformation{A}\otimes\Transformation{B} \) is atomic. 
\end{proof}

\begin{proposition}\label{prop:unavoidable}
  Let \( \{ \SState{\alpha}[s][t] \}_{ (s,t) \in \Gamma_n \times \Gamma_m} \subset
  \SetStates(\sn\triangleright\sm) \) the atomic states of the system \( \sn\triangleright\sm \);
  similarly let \( \{ \SState{\alpha^\prime}[s^\prime][t^\prime] \}_{ (s^\prime,t^\prime) \in
  \Gamma_{n^\prime} \times \Gamma_{m^\prime} } \subset \SetStates(\sn^\prime\triangleright\sm^\prime)
  \) the atomic states of the system \( \sn^\prime\triangleright\sm^\prime \).  Then, the atomic
  states of the composite system \( \sys x \triangleright \sys y :=
  (\sn\triangleright\sm)(\sn^\prime\triangleright\sm^\prime) \) are the elements \(
  \SState{\alpha}[s][t] \otimes \SState{\alpha^\prime}[s^\prime][t^\prime] \).
\end{proposition}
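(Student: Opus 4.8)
The plan is to prove the two inclusions separately: that each parallel composition $\SState{\alpha}[s][t]\otimes\SState{\alpha^\prime}[s^\prime][t^\prime]$ is an atomic state of $\sys x\triangleright\sys y$, and conversely that these exhaust the atomic states. The first inclusion is immediate from Proposition~\ref{prop:otimes-is-atomic}: a state is a transformation whose input is the trivial system, so applying that proposition with both input systems trivial shows that the parallel composition of two atomic states is atomic. This produces $n m n^\prime m^\prime$ distinct atomic states of $\sys x\triangleright\sys y$; moreover, since $\{\SState{\alpha}[s][t]\}$ is a basis of $\SetStates_\mathbb{R}(\sn\triangleright\sm)$ and $\{\SState{\alpha^\prime}[s^\prime][t^\prime]\}$ is a basis of $\SetStates_\mathbb{R}(\sn^\prime\triangleright\sm^\prime)$, Local Discriminability---which forces $\SetStates_\mathbb{R}(\sys x\triangleright\sys y)=\SetStates_\mathbb{R}(\sn\triangleright\sm)\otimes\SetStates_\mathbb{R}(\sn^\prime\triangleright\sm^\prime)$---makes these products a basis of $\SetStates_\mathbb{R}(\sys x\triangleright\sys y)$, which therefore has dimension $xy=nmn^\prime m^\prime$.

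For the converse I would show that the whole state cone $\lambda\SetStates(\sys x\triangleright\sys y)$ coincides with the simplicial cone $\Lambda$ generated by the product atomic states. One inclusion is clear: each $\SState{\alpha}[s][t]\otimes\SState{\alpha^\prime}[s^\prime][t^\prime]$ is a state (a parallel composition of states), so $\Lambda\subseteq\lambda\SetStates(\sys x\triangleright\sys y)$; note that $\Lambda$ is full-dimensional since its generators form a basis, and that each single-system state is a nonnegative combination of atomic states, because $\SState{\alpha}[][][f][\Xi]=\sum_{i\in\Xi}\SState{\alpha}[i][f(i)]$---i.e.~$\lambda\SetStates(\sn\triangleright\sm)$ is itself simplicial. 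For the opposite inclusion I would use that the dual-basis functionals $\SEffect{a}[s][t]$, determined by $\RBraKet{\NSEffect{a}[s][t]}{\NSState{\alpha}[u][v]}=\delta_{su}\delta_{tv}$, are admissible maps and hence, by the No-Restriction Hypothesis, effects of $\sn\triangleright\sm$; their parallel compositions $\SEffect{a}[s][t]\otimes\SEffect{a^\prime}[s^\prime][t^\prime]$ are then effects of $\sys x\triangleright\sys y$ and are precisely the coordinate functionals with respect to the product basis. Since every effect is nonnegative on every state, every state of $\sys x\triangleright\sys y$ has nonnegative product coordinates, i.e.~$\lambda\SetStates(\sys x\triangleright\sys y)\subseteq\Lambda$. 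Hence $\lambda\SetStates(\sys x\triangleright\sys y)=\Lambda$ is simplicial, and its extremal rays---the atomic states---are exactly the rays through the products $\SState{\alpha}[s][t]\otimes\SState{\alpha^\prime}[s^\prime][t^\prime]$.

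I expect the main obstacle to be the inclusion $\lambda\SetStates(\sys x\triangleright\sys y)\subseteq\Lambda$. Knowing that the products are atomic and span $\SetStates_\mathbb{R}(\sys x\triangleright\sys y)$ is not by itself enough to exclude a further extremal ray, since a polyhedral cone can have more extremal rays than its dimension even when some of them form a basis; what closes the gap is the availability of the local dual-basis effects together with the requirement that every effect be nonnegative on every state, so the careful point is to verify, via No-Restriction, that these local functionals are genuine effects (and therefore that their parallel compositions act as the desired coordinate functionals on the composite). The remaining steps---the appeal to Proposition~\ref{prop:otimes-is-atomic} and the explicit description of $\SetStates(\sn\triangleright\sm)$---are routine.
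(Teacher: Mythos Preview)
Your argument is sound but considerably more involved than the paper's, and it overlooks the one fact that makes the paper's proof immediate: the composite $(\sn\triangleright\sm)(\sn'\triangleright\sm')$ is \emph{by definition} the system $\sys x\triangleright\sys y$ with $x=nn'$ and $y=mm'$, so the theory's own specification of $\SetStates(\sys x\triangleright\sys y)$ already fixes the number of atomic states at $xy=nmn'm'$. Proposition~\ref{prop:otimes-is-atomic} shows that the $nmn'm'$ products $\SState{\alpha}[s][t]\otimes\SState{\alpha'}[s'][t']$ are atomic; since they are pairwise distinct and match the prescribed count, they must exhaust the atomic states. No cone analysis and no effects are needed.

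Your dual-cone route would establish the same conclusion in a setting where the composite state space is not handed to you in advance, which is a genuine gain in generality. The cost is precisely the step you flag as delicate: showing that the dual-basis functionals $\SEffect{a}[s][t]$ are admissible, hence effects by No-Restriction. In the paper this is Proposition~\ref{prop:atomic-effects}, proved via Proposition~\ref{prop:loc-admiss-iff-admiss} (local admissibility suffices for admissibility), whose proof already relies on the bipartite state structure supplied by the present proposition. So within the paper's logical order your argument is circular; to make it self-contained you would have to check directly that $\SEffect{a}[s][t]\otimes\Transformation{I}_\sC$ sends every state of $(\sn\triangleright\sm)\sC$ to a state of $\sC$, and writing those bipartite states out explicitly is exactly the identification you are trying to prove. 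The counting argument sidesteps this entirely.
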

\begin{proof}
  By definition, the system \( \sys x \triangleright \sys y \) has \(
  x \times y \) atomic states, and since \( \sys x \triangleright \sys
  y = (\sn\triangleright\sm) (\sn^\prime\triangleright\sm^\prime)\) we
  have \( x \times y = n \times m \times n^\prime \times m^\prime \).
  Since the states \( \SState{\alpha}[s][t] \otimes
  \SState{\alpha^\prime}[s^\prime][t^\prime] \in \SetStates(\sys x
  \triangleright \sys y) \) are atomic (see
  Prop.~\ref{prop:otimes-is-atomic}), different from each other, and
  their cardinality is exactly \( n \times m \times n^\prime \times
  m^\prime \), we conclude that they are the atomic states
  of \( \SetStates(\sys x \triangleright \sys y) \).
\end{proof}

\begin{proposition}\label{prop:loc-admiss-iff-admiss}
A linear map
\( \Transformation{T}\in\SetTransf_{\mathbb R}(\sn\triangleright\sm,\sp\triangleright\sq) \) is admissible if and only if is
\emph{locally admissible}, i.e.~\( \Transformation{T}(\SetStates(\sn\triangleright\sm)) \subseteq
\SetStates(\sp\triangleright\sq) \).
\end{proposition}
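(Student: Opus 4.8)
The forward implication is immediate: specialising the definition of admissibility to the trivial ancilla $\sC=\sI$, and using $(\sn\triangleright\sm)\sI=\sn\triangleright\sm$ and $\Transformation{T}\otimes\Transformation{I}_\sI=\Transformation{T}$, gives precisely $\Transformation{T}(\SetStates(\sn\triangleright\sm))\subseteq\SetStates(\sp\triangleright\sq)$, i.e.\ local admissibility. So the content is the converse, which I would prove by a direct structural argument. Assume $\Transformation{T}$ locally admissible and fix an arbitrary ancilla $\sC=\sys a\triangleright\sys b$, whose atomic states are the $\SState{\alpha}[v][v^\prime]$ with $v\in\Gamma_a$, $v^\prime\in\Gamma_b$; the composite input and output systems are $\sys{na}\triangleright\sys{mb}$ and $\sys{pa}\triangleright\sys{qb}$, and by Local Discriminability together with Prop.~\ref{prop:unavoidable} the products $\SState{\alpha}[s][s^\prime]\otimes\SState{\alpha}[v][v^\prime]=\SState{\alpha}[(s,v)][(s^\prime,v^\prime)]$ form a basis of $\SetStates_{\mathbb R}(\sys{na}\triangleright\sys{mb})$, on which $(\Transformation{T}\otimes\Transformation{I}_\sC)\SState{\alpha}[(s,v)][(s^\prime,v^\prime)]=(\Transformation{T}\SState{\alpha}[s][s^\prime])\otimes\SState{\alpha}[v][v^\prime]$. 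The goal is to show that $\Transformation{T}\otimes\Transformation{I}_\sC$ maps $\SetStates(\sys{na}\triangleright\sys{mb})$ into $\SetStates(\sys{pa}\triangleright\sys{qb})$.

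First I would prove the pivotal lemma that local admissibility makes $\Transformation{T}$ preserve disjointness of domains: if $\State{\rho},\State{\sigma}\in\SetStates(\sn\triangleright\sm)$ have disjoint domains then so do $\Transformation{T}\State{\rho}$ and $\Transformation{T}\State{\sigma}$. Indeed, by the state-addition rule of Sect.~\ref{uffa} disjoint domains make $\State{\rho}+\State{\sigma}$ a state, so local admissibility makes $\Transformation{T}\State{\rho}$, $\Transformation{T}\State{\sigma}$ and $\Transformation{T}\State{\rho}+\Transformation{T}\State{\sigma}$ all states of $\sp\triangleright\sq$; but, again by the rule of Sect.~\ref{uffa} and the linear independence of the atomic states (which follows from the atomic effects), the sum of two states of $\sp\triangleright\sq$ is a state only when their domains are disjoint, since otherwise some atomic state would receive coefficient $\geq 2$ or two atomic states with the same first index would receive nonzero coefficient.

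Then I would reorganise a generic composite state by the atomic state of $\sC$. Write $\State{\Sigma}=\SState{\alpha}[][][F][\Xi]\in\SetStates(\sys{na}\triangleright\sys{mb})$ with $\Xi\subseteq\Gamma_n\times\Gamma_a$ and $F=(F_1,F_2):\Xi\to\Gamma_m\times\Gamma_b$; for $v\in\Gamma_a$, $v^\prime\in\Gamma_b$ put $\Xi_{vv^\prime}:=\{s\in\Gamma_n\mid(s,v)\in\Xi,\ F_2(s,v)=v^\prime\}$ and let $\State{\sigma_{vv^\prime}}\in\SetStates(\sn\triangleright\sm)$ be the state with domain $\Xi_{vv^\prime}$ sending $s\mapsto F_1(s,v)$. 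Then $\State{\Sigma}=\sum_{v,v^\prime}\State{\sigma_{vv^\prime}}\otimes\SState{\alpha}[v][v^\prime]$, and since $F_2(\cdot,v)$ is single-valued, for each fixed $v$ the domains $\Xi_{vv^\prime}$ are pairwise disjoint in $v^\prime$. Applying $\Transformation{T}\otimes\Transformation{I}_\sC$ gives $(\Transformation{T}\otimes\Transformation{I}_\sC)\State{\Sigma}=\sum_{v,v^\prime}(\Transformation{T}\State{\sigma_{vv^\prime}})\otimes\SState{\alpha}[v][v^\prime]$, where each $\Transformation{T}\State{\sigma_{vv^\prime}}$ is a state of $\sp\triangleright\sq$ by local admissibility and, by the lemma, for each fixed $v$ these states have pairwise disjoint domains in $v^\prime$. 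Expanding them over the atomic states $\SState{\alpha}[t][t^\prime]$ and using $\SState{\alpha}[t][t^\prime]\otimes\SState{\alpha}[v][v^\prime]=\SState{\alpha}[(t,v)][(t^\prime,v^\prime)]$, I would check that the resulting combination has all coefficients in $\{0,1\}$ and carries weight on at most one atomic state $\SState{\alpha}[(t,v)][(t^\prime,v^\prime)]$ for each pair $(t,v)$: two distinct blocks contribute to the same $(t,v)$ only if they agree on $v$, and then the disjointness just noted forbids $t$ from lying in both domains, while within a single block $\Transformation{T}\State{\sigma_{vv^\prime}}$ is already a state. Hence $(\Transformation{T}\otimes\Transformation{I}_\sC)\State{\Sigma}\in\SetStates(\sys{pa}\triangleright\sys{qb})$, and since $\sC$ was arbitrary, $\Transformation{T}$ is admissible.

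I expect the main obstacle to be exactly this last verification — matching, term by term, the two conditions defining a state of $\sys{pa}\triangleright\sys{qb}$ ($0/1$ coefficients, and at most one image atomic state per input atomic state) against what is supplied, respectively, by the single-system local admissibility of $\Transformation{T}$ (inside each block) and by the disjointness-preservation lemma (across blocks sharing an ancilla index). The grouping of $\State{\Sigma}$ by the atomic state $\SState{\alpha}[v][v^\prime]$ of $\sC$ is precisely the device that makes these two ingredients sufficient; everything else reduces to the definitions and Prop.~\ref{prop:unavoidable}.
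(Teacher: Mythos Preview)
Your proof is correct and follows the same overall strategy as the paper's: extract a disjointness-of-images property from local admissibility, then verify that $\Transformation{T}\otimes\Transformation{I}_\sC$ sends an arbitrary composite state to a valid state. The organisation differs, however. The paper groups the composite expansion by the \emph{first-system} indices $(s,s')$, applies $\Transformation{T}$ to each atomic $\SState{\alpha}[s][s']$, and then checks disjointness of the resulting domains $\Delta^{ss'}$ via explicit indicator-function and Kronecker-delta computations; its Eq.~\eqref{eq:null-intersection} is precisely the atomic special case of your ``pivotal lemma''. You instead isolate disjointness preservation as a lemma for arbitrary states, group the composite state by the \emph{ancilla} atomic states $(v,v')$, and reduce the final verification to that lemma. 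Your route is more structural and avoids the index-chasing; the paper's is more explicitly computational. One small caution: your parenthetical appeal to the atomic effects for linear independence of atomic states would be circular in the paper's logical order (Prop.~\ref{prop:atomic-effects} invokes the present proposition), but the independence you need already follows from the dimension count $\dim\SetStates_{\mathbb R}(\sn\triangleright\sm)=n\cdot m$ stated in Sect.~\ref{uffa}, so no harm is done.
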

\begin{proof}
  First, let us recall that a map \( \Transformation{T}\in\SetTransf_{\mathbb R}(\sA,\sA^\prime) \)
  is admissible if and only if \( \Transformation{T} \otimes \Transformation{I}_{\sB} (
  \SetStates(\sA\sB) ) \subseteq \SetStates(\sA^\prime\sB) \) for every system \( \sB \).
  Let us prove the equivalence for the deterministic non-causal theory in two steps.
  \paragraph{\( ( \Rightarrow ) \):} this implication is trivial and it always holds, regardless the
theory involved; i.e.~local admissibility can be derived from
    the admissibility taking the system \( \sB \) to be the trivial one \( \sI \).
    \paragraph{\( ( \Leftarrow ) \):} the linear map \(
    \Transformation{T} \in
    \SetTransf_{\mathbb{R}}(\sn\triangleright\sm,
    \sp\triangleright\sq) \) is Locally Admissible by hypothesis,
    therefore for any atomic state \( \SState{\alpha}[s][s^\prime] \in
    \SetStates(\sn\triangleright\sm) \) we have \( \Transformation{T}
    \SState{\alpha}[s][s^\prime] =
    \SState{\alpha}[][][f^{ss^\prime}][\Xi^{ss^\prime}] \in
    \SetStates(\sp\triangleright\sq) \), where \(
    f^{ss^\prime}:~\Xi^{ss^\prime}\subseteq\Gamma_p \to \Gamma_q \).
    Notice that, since for \( s_0 \neq s_1 \) the state
    $\SState{\alpha}[s_0][s_0^\prime] +
    \SState{\alpha}[s_1][s_1^\prime]$ is valid, then by Local
    Admissibility also \( \Transformation{T}
    [~\SState{\alpha}[s_0][s_0^\prime] +
    \SState{\alpha}[s_1][s_1^\prime]~] = \SState{\alpha}[][][f^{s_0
      s_0^\prime}][\Xi^{s_0 s_0^\prime}] + \SState{\alpha}[][][f^{s_1
      s_1^\prime}][\Xi^{s_1 s_1^\prime}] \) is a valid state,
    therefore we must have that
\begin{equation}\label{eq:null-intersection}
\Xi^{s_0 s_0^\prime} \cap \Xi^{s_1 s_1^\prime} = \emptyset\qquad \forall s_0^\prime,\forall
s_1^\prime\text{ and } s_0 \neq s_1.
\end{equation}

For an arbitrary  system \(
\sn^\prime\triangleright\sm^\prime \), let us choose freely the state
  \( \SState{\alpha}[][][g][\Upsilon] \) of the composite system \( \sys x \triangleright \sys y :=
  (\sn\triangleright\sm) (\sn^\prime\triangleright\sm^\prime) \). It can be expanded on the atomic
  multipartite states \( \SState{\alpha}[s][s^\prime] \otimes \SState{\alpha}[t][t^\prime] \)---with
  \( \SState{\alpha}[s][s^\prime] \in \SetStates(\sn\triangleright\sm) \), \(
  \SState{\alpha^\prime}[t][t^\prime] \in \SetStates(\sn^\prime\triangleright\sm^\prime) \)---as \(
  \SState{\alpha}[][][g][\Upsilon] = \sum_{ss^\prime tt^\prime} \alpha_{ss^\prime tt^\prime}
  \SState{\alpha}[s][s^\prime] \otimes \SState{\alpha^\prime}[t][t^\prime] \) with \( \alpha_{ss^\prime
  tt^\prime} := \delta_{s^\prime g_1(s,t)} \delta_{t^\prime g_2(s,t)} \chi_\Upsilon(s,t)
  \), for a couple of functions \( g_1:~\Upsilon \to \Gamma_m \), \( g_2:~
  \Upsilon\to \Gamma_{m^\prime} \) such that \( g(s,t) = (g_1(s,t),g_2(s,t))
  \). 
On such arbitrary multipartite state the map \( \Transformation{T} \otimes
  \Transformation{I}_{\sn^\prime\triangleright\sm^\prime} \) leads to a valid state of the composite
  system \(\sys x'\triangleright\sys y':= (\sp\triangleright\sq)(\sn^\prime\triangleright\sm^\prime) \):
  \begin{align}
    & \notag [\ \Transformation{T} \otimes
      \Transformation{I}_{\sn^\prime\triangleright\sm^\prime}\ ] \ \SState{\alpha}[][][g][\Upsilon]
      = \\ 
    & \notag = \sum_{ss^\prime tt^\prime} \alpha_{ss^\prime tt^\prime}
      \Transformation{T}{\SState{\alpha}[s][s^\prime]} \otimes
      \Transformation{I}_{\sn^\prime\triangleright\sm^\prime} \SState{\alpha}[t][t^\prime] = \\
    & \notag = \sum_{ss^\prime tt^\prime} \alpha_{ss^\prime tt^\prime}
      {\SState{\alpha}[][][f^{ss^\prime}][\Xi^{ss^\prime}]} \otimes
\SState{\alpha}[t][t^\prime] = \\
    & \notag = \sum_{ss^\prime tt^\prime vv^\prime} \alpha_{ss^\prime tt^\prime}  \delta_{v^\prime f^{ss^\prime}(v)} \chi_{\Xi^{ss^\prime}}(v)
      \SState{\alpha}[v][v^\prime] \otimes
\SState{\alpha}[t][t^\prime] = \\
    & \begin{aligned}
 = \sum_{ss^\prime} \sum_{tt^\prime vv^\prime} \delta_{s^\prime g_1(s,t)} \delta_{t^\prime g_2(s,t)} \chi_\Upsilon(s,t)
 \delta_{v^\prime f^{ss^\prime}(v)} \chi_{\Xi^{ss^\prime}}(v) \times \\
      \times \SState{\alpha}[v][v^\prime] \otimes
\SState{\alpha}[t][t^\prime].\label{eq:dsa} 
\end{aligned}
  \end{align}
  The most internal sum represents the valid state \(
  \SState{\alpha}[][][h^{ss^\prime}][\Delta^{ss^\prime}] \in
  \SetStates(\sys x^\prime\triangleright\sys y^\prime) \) with \(
  h^{ss^\prime}:~\Delta^{ss'}\to \Gamma_q\times\Gamma_{m^\prime} \),
  where \( \Delta^{ss^\prime} \subseteq \Gamma_p \times
  \Gamma_{n^\prime} \) is defined by \( \chi_{\Delta^{ss^\prime}}(x,y)
  := \delta_{s^\prime g_1(s,y)} \chi_{\Upsilon}(s,y)
  \chi_{\Xi^{ss^\prime}}(x) \), and \( h^{ss^\prime}(x,y) := (
  h^{ss^\prime}_1(x,y), h^{ss^\prime}_2(x,y)) \), \(
  h_1^{ss^\prime}(x,y) := f^{ss^\prime}(x) \), \( h_2^{ss^\prime}(x,y)
  := g_2 (s,y) \). Hence the relation of Eq.~\eqref{eq:dsa} can be
  rewritten as \( [\ \Transformation{T} \otimes
  \Transformation{I}_{\sn^\prime\triangleright\sm^\prime}\ ] \
  \SState{\alpha}[][][g][\Upsilon] = \sum_{ss^\prime}
  \SState{\alpha}[h^{ss^\prime}][\Delta^{ss^\prime}] \). This sum
  represents a valid states for \( \SetStates(
  \sys x^\prime\triangleright\sys y^\prime) \) since
  the various \( \Delta^{ss^\prime} \) are disjoint: let us take two
  sets \( \Delta^{s_0 s_0^\prime} \), \( \Delta^{s_1 s_1^\prime} \),
  and evaluate \( \chi_{\Delta^{s_0 s_0^\prime}\cap\Delta^{s_1
      s_1^\prime}} \equiv \chi_{\Delta^{s_0 s_0^\prime}}(x,y)
  \chi_{\Delta^{s_1 s_1^\prime}}(x,y) \). If \( s_0 = s_1 \) we have
\begin{align*}
& \delta_{s_0^\prime g_1(s_0,y)}\delta_{s_1^\prime g_1(s_0,y)}
\chi_{\Upsilon}^2(s_0,y)
\chi_{\Xi^{s_0 s_0^\prime}}(x) 
\chi_{\Xi^{s_0 s_1^\prime}}(x) = \\
& = \delta_{s_0^\prime s_1^\prime}\delta_{s_0^\prime g_1(s_0,y)}
\chi_{\Upsilon}^2(s_0,y)
\chi_{\Xi^{s_0 s_0^\prime}}(x) 
\chi_{\Xi^{s_0 s_1^\prime}}(x) 
\end{align*}
which is equal to zero when \( s_0^\prime \neq s_1^\prime \)---thanks to the first Kronecker's
delta. On the other hand if \( s_0 \neq s_1 \) we have
\begin{align*}
& \delta_{s_0^\prime g_1(s_0,y)}\delta_{s_1^\prime g_1(s_1,y)}
\chi_{\Upsilon}(s_0,y)
\chi_{\Upsilon}(s_1,y)
\chi_{\Xi^{s_0 s_0^\prime}}(x) 
\chi_{\Xi^{s_1 s_1^\prime}}(x) \\
& \delta_{s_0^\prime g_1(s_0,y)}\delta_{s_1^\prime g_1(s_1,y)}
\chi_{\Upsilon}(s_0,y)
\chi_{\Upsilon}(s_1,y)
\chi_{\Xi^{s_0 s_0^\prime}\cap\Xi^{s_1 s_1^\prime}}(x), 
\end{align*}
which is always equals to zero thanks to
Eq.~\eqref{eq:null-intersection}, which implies $\chi_{\Xi^{s_0
    s_0^\prime}\cap\Xi^{s_1 s_1^\prime}}(x)=0$.
\end{proof}

From now on, all the admissibility proofs will be reduced to local
admissibility, thanks to Prop.~\ref{prop:loc-admiss-iff-admiss}.

\begin{proposition}\label{prop:atomic-effects}
  Under the No-Restriction Hypothesis the atomic effects of \(
  \sn\triangleright\sm \) are the elements \( \SEffect{a}[s][s^\prime]
  \) of $\SetEffects_\mathbb R(\sn\triangleright\sm)$ with \(
  (s,s^\prime) \in \Gamma_n\times\Gamma_m \) such that \( \RBraKet{
    \NSEffect{a}[s][s^\prime] }{ \NSState{\alpha}[t][t^\prime] } =
  \delta_{st} \delta_{s^\prime t^\prime} \) \( \forall
  s,t\in\Gamma_n\) and \( \forall s^\prime,t^\prime\in\Gamma_m \).
\end{proposition}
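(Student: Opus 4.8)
The plan is to realise $\SetEffects_{\mathbb R}(\sn\triangleright\sm)$ as the dual space of $\SetStates_{\mathbb R}(\sn\triangleright\sm)$, to exhibit the $\SEffect{a}[s][s^\prime]$ as the dual basis of the basis of atomic states, and then to show that the extremal rays of $\lambda\SetEffects(\sn\triangleright\sm)$ are precisely the rays through these functionals. First I would record that the atomic states $\{\SState{\alpha}[t][t^\prime]\}_{(t,t^\prime)\in\Gamma_n\times\Gamma_m}$ form a basis of $\SetStates_{\mathbb R}(\sn\triangleright\sm)$: every state decomposes as $\SState{\alpha}[][][f][\Xi]=\sum_{i\in\Xi}\SState{\alpha}[i][f(i)]$ by the additivity rule for states, so the atomic states span $\SetStates_{\mathbb R}(\sn\triangleright\sm)$, and since their number $n\cdot m$ equals $\dim\SetStates_{\mathbb R}(\sn\triangleright\sm)$ they are linearly independent. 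Consequently, for every $(s,s^\prime)\in\Gamma_n\times\Gamma_m$ there is a unique linear functional $\SEffect{a}[s][s^\prime]\in\SetEffects_{\mathbb R}(\sn\triangleright\sm)$ with $\RBraKet{\NSEffect{a}[s][s^\prime]}{\NSState{\alpha}[t][t^\prime]}=\delta_{st}\delta_{s^\prime t^\prime}$, and the family $\{\SEffect{a}[s][s^\prime]\}$ is the basis of $\SetEffects_{\mathbb R}(\sn\triangleright\sm)$ dual to the atomic states.

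Next I would check that each $\SEffect{a}[s][s^\prime]$ is genuinely an effect. An effect of $\sn\triangleright\sm$ is a transformation into the trivial system $\sI=(1\triangleright1)$, and from the definition of $\SetStates$ with $n=m=1$ the only states of $\sI$ are $0$ and the unique deterministic state, corresponding to $0$ and $1$ in $\SetStates_{\mathbb R}(\sI)$. By Prop.~\ref{prop:loc-admiss-iff-admiss} (with output system $1\triangleright1$) and the No-Restriction Hypothesis, it therefore suffices to show that $\SEffect{a}[s][s^\prime]$ is locally admissible, i.e.\ that it sends every state $\SState{\alpha}[][][f][\Xi]$ of $\sn\triangleright\sm$ into $\{0,1\}$. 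Using the decomposition above, $\RBraKet{\NSEffect{a}[s][s^\prime]}{\NSState{\alpha}[][][f][\Xi]}=\sum_{i\in\Xi}\delta_{si}\delta_{s^\prime f(i)}=\chi_\Xi(s)\,\delta_{s^\prime f(s)}\in\{0,1\}$, so $\SEffect{a}[s][s^\prime]$ is locally admissible, hence admissible, hence an element of $\SetEffects(\sn\triangleright\sm)$.

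For atomicity I would work inside the cone $\lambda\SetEffects(\sn\triangleright\sm)$, using that conic combinations of effects take non-negative values on states (and that this cone is pointed). If $\SEffect{a}[s][s^\prime]=\Effect{u}+\Effect{v}$ with $\Effect{u},\Effect{v}\in\lambda\SetEffects(\sn\triangleright\sm)$, then evaluating on $\SState{\alpha}[t][t^\prime]$ with $(t,t^\prime)\neq(s,s^\prime)$ gives $\RBraKet{u}{\NSState{\alpha}[t][t^\prime]}+\RBraKet{v}{\NSState{\alpha}[t][t^\prime]}=0$ with both summands $\geq0$, hence both vanish; since the atomic states are a basis, $\Effect{u}$ and $\Effect{v}$ are then non-negative multiples of $\SEffect{a}[s][s^\prime]$, which therefore lies on an extremal ray and is atomic. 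Conversely, let $\Effect{c}$ be any atomic effect and expand it in the dual basis, $\Effect{c}=\sum_{t,t^\prime}c_{t,t^\prime}\SEffect{a}[t][t^\prime]$ with $c_{t,t^\prime}=\RBraKet{c}{\NSState{\alpha}[t][t^\prime]}\geq0$; since each $\SEffect{a}[t][t^\prime]$ is an effect, this writes $\Effect{c}$ as a conic combination of elements of $\lambda\SetEffects(\sn\triangleright\sm)$. Extremality of the ray through $\Effect{c}$ forces every nonzero summand $c_{t,t^\prime}\SEffect{a}[t][t^\prime]$ to be proportional to $\Effect{c}$, and since distinct $\SEffect{a}[t][t^\prime]$ are linearly independent, at most one $c_{t,t^\prime}$ can be nonzero; hence $\Effect{c}=c_{s,s^\prime}\SEffect{a}[s][s^\prime]$ for a single pair $(s,s^\prime)$, and the effect condition $c_{s,s^\prime}=\RBraKet{c}{\NSState{\alpha}[s][s^\prime]}\in\{0,1\}$ forces $\Effect{c}=\SEffect{a}[s][s^\prime]$.

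The calculations here are immediate once the atomic states are recognised as a basis; the points that need care are the reduction of admissibility of effects to the $\{0,1\}$-valued condition on states (this is exactly what Prop.~\ref{prop:loc-admiss-iff-admiss} plus the No-Restriction Hypothesis deliver, once $\SetStates(\sI)$ has been pinned down) and the bookkeeping with the cone $\lambda\SetEffects(\sn\triangleright\sm)$ and its extremal rays. I expect that bookkeeping, rather than any computation, to be the main obstacle.
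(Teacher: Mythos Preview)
Your proof is correct and follows essentially the same approach as the paper: identify the $\SEffect{a}[s][s^\prime]$ as the dual basis to the atomic states, verify local admissibility (hence effect-hood via Prop.~\ref{prop:loc-admiss-iff-admiss} and the No-Restriction Hypothesis), and then argue that the cone $\lambda\SetEffects(\sn\triangleright\sm)$ is the simplicial cone on these elements. The only cosmetic difference is that the paper compresses your two atomicity directions into one observation---every effect has $\{0,1\}$ coefficients in the dual basis, so the cone is simplicial and the $\SEffect{a}[s][s^\prime]$ are precisely its extremal rays---whereas you spell out separately that each $\SEffect{a}[s][s^\prime]$ is extremal and that every extremal effect is one of them.
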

\begin{proof}
  The proof goes in three simple steps: first we show that the
  elements \( \SEffect{a}[s][s^\prime] \) are admissible. After
  showing that they are also linearly independent (therefore they span
  all the set \( \SetEffects_\mathbb{R}(\sn\triangleright\sm) \)) we
  show that every effect \( \SEffect{c} \) for the system \(
  \sn\triangleright\sm \) can be written as \( \SEffect{c} = \sum_{ij}
  c_{ss^\prime} \SEffect{a}[s][s^\prime] \) with \( c_{ss^\prime} \)
  non negative, proving that the set of atomic effects coincides with
  the set \(
  \{\SEffect{a}[s][s^\prime]\}_{(s,s^\prime)\in\Gamma_n\times\Gamma_m}
  \).

  The effects $\SEffect{a}[s][s']$ are locally admissible, since for
  every state $\SState{\alpha}[][][f][\Xi]$
  \begin{align*}
    \RBraKet{\NSEffect{a}[s][s']}{\NSState{\alpha}[][][f][\Xi]}=\sum_{tt'}\chi_{\Xi}(t)\delta_{t'f(t)}\RBraKet{\NSEffect{a}[s][s']}{\NSState{\alpha}[t][t']}=\chi_{\Xi}(s)\delta_{s'f(s)},
  \end{align*}
  which is an admissible probabilty $p\in\{0,1\}$.
  Thanks to Prop.~\ref{prop:loc-admiss-iff-admiss}, the
  $\SEffect{a}[s][s']$ are admissible, and by the No-Restriction
  Hypothesis they belong to $\SetEffects(\sn\triangleright\sm)$. 

  Now, let us show that a null linear combination of the elements \(
  \SEffect{a}[t][t^\prime] \)---say \(
  \SEffect{c}=\sum_{tt^\prime}c_{tt^\prime}\SEffect{a}[t][t^\prime]
  \)---necessarily has \( c_{tt^\prime}=0 \) \( \forall t\in\Gamma_n
  \), \( \forall t^\prime\in\Gamma_m \). Indeed, for any atomic state
  \( \SState{\alpha}[s][s^\prime] \) we get
  \begin{equation*}
    0=\RBraKet{c}{\NSState{\alpha}[s][s^\prime]} = \sum_{tt^\prime} c_{tt^\prime}
    \RBraKet{\NSEffect{a}[t][t^\prime]}{\NSState{\alpha}[s][s^\prime]} = c_{ss^\prime},
  \end{equation*}
  for every \( s,s^\prime \), i.e.~all the \( \SEffect{a}[t][t^\prime]
  \) are linearly independent.  We have that the number of different
  effects \( \SEffect{a}[t][t^\prime] \in \SetEffects(
  \sn\triangleright\sm ) \) is \( n \cdot m \), as many as \( \dim
  \SetStates_\mathbb{R}(\sn\triangleright\sm) = \dim
  \SetEffects_\mathbb{R}(\sn\triangleright\sm) = n\cdot m \): we
  conclude that the effects \( \SEffect{a}[t][t^\prime] \in
  \SetEffects( \sn\triangleright\sm )\) span the whole linear space \(
  \SetEffects_\mathbb{R}( \sn\triangleright\sm ) \).

  The third step is easily proven noticing that an arbitrary effect \(
  \SEffect{c} = \sum_{tt^\prime} c_{tt^\prime}
  \SEffect{a}[t][t^\prime] \) is a \( \{ 0,1 \} \)-functional over the
  states. Since \( \RBraKet{c}{\NSState{\alpha}[i][j]} = c_{ij}\) \(
  \forall i\in\Gamma_n \), \( \forall j \in \Gamma_m \), we conclude
  that every effect is a conic combination of the elements \(
  \SEffect{a}[t][t^\prime] \) with coefficients $0$ or $1$. Since
  linear combination with negative coefficients are forbidden we
  conclude that all the effects $\SEffect{a}[t][t^\prime]$ are atomic. For
  the same reason, there are no other atomic effects in
  $\SetEffects(\sn\triangleright\sm)$.
\end{proof}

\begin{proposition}\label{prop:effects}
  Under the No-Restriction Hypothesis the effects of the system \( \sn\triangleright\sm \) are the
  elements \( \SEffect{a}[][][v][E] := \sum_{i \in E} \SEffect{a}[v][i] \), with \( i\in\Gamma_n \),
  \( E \subseteq \Gamma_m \).
\end{proposition}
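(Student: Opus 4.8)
The plan is to prove the two inclusions separately, taking Prop.~\ref{prop:atomic-effects} as the main input. First I would check that every $\SEffect{a}[][][v][E]$ is indeed an effect. By linearity together with the characterization of the atomic effects in Prop.~\ref{prop:atomic-effects}, a direct computation gives, for an arbitrary state $\SState{\alpha}[][][f][\Xi]$,
\begin{equation*}
  \RBraKet{\NSEffect{a}[][][v][E]}{\NSState{\alpha}[][][f][\Xi]} = \chi_\Xi(v)\,\chi_E(f(v)) \in \{0,1\},
\end{equation*}
so $\SEffect{a}[][][v][E]$ is locally admissible; by Prop.~\ref{prop:loc-admiss-iff-admiss} it is admissible, and the No-Restriction Hypothesis then puts it in $\SetEffects(\sn\triangleright\sm)$ (the choice $E=\emptyset$ simply giving the zero effect).

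For the reverse inclusion I would start from Prop.~\ref{prop:atomic-effects}, which already guarantees that any effect can be written as $\SEffect{c}=\sum_{ss^\prime}c_{ss^\prime}\SEffect{a}[s][s^\prime]$ with each coefficient $c_{ss^\prime}=\RBraKet{c}{\NSState{\alpha}[s][s^\prime]}\in\{0,1\}$. The only thing left to establish is that all the nonzero coefficients share a common first index. The key step is to evaluate $\SEffect{c}$ on a two-outcome state: for any $\Xi\subseteq\Gamma_n$ and $f:\Xi\to\Gamma_m$ one has $\RBraKet{c}{\NSState{\alpha}[][][f][\Xi]}=\sum_{s\in\Xi}c_{s,f(s)}$, and this must be a legitimate probability, hence at most $1$. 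Therefore, if $c_{s_0s_0^\prime}=c_{s_1s_1^\prime}=1$ with $s_0\neq s_1$, evaluating $\SEffect{c}$ on the valid state $\SState{\alpha}[s_0][s_0^\prime]+\SState{\alpha}[s_1][s_1^\prime]$ produces $c_{s_0s_0^\prime}+c_{s_1s_1^\prime}=2$, a contradiction. Hence there is a single $v\in\Gamma_n$ with $c_{ss^\prime}=0$ whenever $s\neq v$, and putting $E:=\{\,s^\prime\in\Gamma_m\mid c_{vs^\prime}=1\,\}$ gives $\SEffect{c}=\sum_{s^\prime\in E}\SEffect{a}[v][s^\prime]=\SEffect{a}[][][v][E]$ (with $E=\emptyset$ in the case $\SEffect{c}=0$).

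Since most of the work is already carried out in Prop.~\ref{prop:atomic-effects}, I do not expect any serious difficulty. The one genuinely new ingredient is the choice of the discriminating two-outcome state used above; beyond that, the only thing to be careful about is to dispose of the degenerate case $\SEffect{c}=0$, which is covered by allowing $E=\emptyset$.
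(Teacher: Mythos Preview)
Your proposal is correct and follows essentially the same two-step approach as the paper: first local admissibility of $\SEffect{a}[][][v][E]$ via the pairing $\chi_\Xi(v)\chi_E(f(v))$, then ruling out effects with two nonzero coefficients at distinct first indices by a contradiction argument. The only cosmetic difference is that the paper derives the contradiction from a full deterministic state $\SState{\varepsilon}[][][f]$ with $f(i)=j$, $f(i')=j'$, whereas you use the two-term state $\SState{\alpha}[s_0][s_0']+\SState{\alpha}[s_1][s_1']$; both yield the same value $\geq 2$.
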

\begin{proof}
  The proof proceeds in two steps. First of all we prove that the
  elements \( \SEffect{a}[][][v][E] \in \SetEffects_\mathbb{R}(
  \sn\triangleright\sm ) \) are valid effects for the system
  $\sn\triangleright\sm$. Then we prove that there are no further
  effects in $\SetEffects(\sn\triangleright\sm)$.

  We only need to prove that the elements \( \SEffect{a}[][][v][E] \in
  \SetEffects_\mathbb{R}( \sn\triangleright\sm ) \) are locally
  admissible, and therefore they are admissible by
  Prop.~\ref{prop:loc-admiss-iff-admiss}. Finally, this implies that
  they belong to \( \SetEffects( \sn\triangleright\sm ) \) thanks to
  the No-Restriction Hypothesis.

  The effects $\SEffect{a}[][][v][E]$ are locally admissible, since
  for every state $\SState{\alpha}[][][f][\Xi]$ we have
  \begin{align*}
    \RBraKet{\NSEffect{a}[][][v][E]}{\NSState{\alpha}[][][f][\Xi]}=\chi_{E}(f(v))\chi_\Xi(v),
  \end{align*}
  which is an admissible probabilty $p\in\{0,1\}$.  

  Now let us prove that there are no other effects apart from \(
  \SEffect{a}[][][v][E] \). Given an effect \(
  \SEffect{c}\in\SetEffects(\sn\triangleright\sm) \), thanks to
  Prop.~\ref{prop:atomic-effects} we know it can be expanded over the
  atomic effects \( \SEffect{a}[t][t^\prime] \) as \( \SEffect{c} =
  \sum_{tt^\prime} c_{tt^\prime} \SEffect{a}[t][t^\prime] \) with \(
  c_{tt^\prime} = 0,1 \), \( t\in\Gamma_n \), and \(
  t^\prime\in\Gamma_m \). Suppose by contradiction that there exists a
  valid effect \( \SEffect{c} = \sum_{tt^\prime} c_{tt^\prime}
  \SEffect{a}[t][t^\prime] \) with \( c_{ij} = c_{i^\prime j^\prime} =
  1 \) for some $j$, $j^\prime$ and $i \ne i^\prime$.  Let us take the
  deterministic state \( \SState{\varepsilon}[][][f] \in
  \SetStates(\sn\triangleright\sm) \) with \( f(i)=j \) and \(
  f(i^\prime)=j^\prime \); we have that \(
  \RBraKet{c}{\NSState{\varepsilon}[][][f]} \geq 2 \), an absurd.
\end{proof}

\begin{proposition}\label{prop:atomic-are-admissible}
  Under the No-Restriction Hypothesis, the linear maps \(
  \ExtrTransformation{s}{s^\prime}{t}{t^\prime} \in \SetTransf_\mathbb{R}( \sn\triangleright\sm,
  \sp\triangleright\sq ) \) with \( (s,s^\prime,t,t^\prime) \in
  \Gamma_n\times\Gamma_m\times\Gamma_p\times\Gamma_q \) such that \(
  \ExtrTransformation{s}{s^\prime}{t}{t^\prime} \SState{\alpha}[v][v^\prime] = \delta_{sv}
  \delta_{s^\prime v^\prime} \SState{\alpha}[t][t^\prime] \), are valid transformations.
\end{proposition}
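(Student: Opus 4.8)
The plan is to reduce the statement to \emph{local} admissibility via Proposition~\ref{prop:loc-admiss-iff-admiss} and then to invoke the No-Restriction Hypothesis. Since the $n\cdot m$ atomic states $\SState{\alpha}[v][v^\prime]$ form a basis of $\SetStates_\mathbb{R}(\sn\triangleright\sm)$ (their number equals $\dim\SetStates_\mathbb{R}(\sn\triangleright\sm)$), the prescription $\ExtrTransformation{s}{s^\prime}{t}{t^\prime}\SState{\alpha}[v][v^\prime] := \delta_{sv}\delta_{s^\prime v^\prime}\SState{\alpha}[t][t^\prime]$ does uniquely determine a linear map in $\SetTransf_\mathbb{R}(\sn\triangleright\sm,\sp\triangleright\sq)$, so the only thing left to verify is that it sends $\SetStates(\sn\triangleright\sm)$ into $\SetStates(\sp\triangleright\sq)$.

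First I would take an arbitrary state $\SState{\alpha}[][][f][\Xi]\in\SetStates(\sn\triangleright\sm)$, with $f\colon\Xi\to\Gamma_m$ and $\Xi\subseteq\Gamma_n$, and use the additivity rule of Sect.~\ref{uffa} to expand it over atomic states as $\SState{\alpha}[][][f][\Xi]=\sum_{v\in\Xi}\SState{\alpha}[v][f(v)]$. Applying the map termwise and using its defining relation yields
\begin{equation*}
  \ExtrTransformation{s}{s^\prime}{t}{t^\prime}\,\SState{\alpha}[][][f][\Xi]
  = \sum_{v\in\Xi}\delta_{sv}\,\delta_{s^\prime f(v)}\,\SState{\alpha}[t][t^\prime]
  = \chi_{\Xi}(s)\,\delta_{s^\prime f(s)}\,\SState{\alpha}[t][t^\prime].
\end{equation*}
Thus the image is either the atomic state $\SState{\alpha}[t][t^\prime]$ of $\sp\triangleright\sq$ (when $s\in\Xi$ and $f(s)=s^\prime$) or the null vector otherwise; the latter is the $\Xi=\emptyset$ element of $\SetStates(\sp\triangleright\sq)$, which is forced to lie in the state set by the same additivity convention. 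In both cases the image is a valid state, so $\ExtrTransformation{s}{s^\prime}{t}{t^\prime}$ is locally admissible; by Proposition~\ref{prop:loc-admiss-iff-admiss} it is then admissible, and by the No-Restriction Hypothesis it belongs to $\SetTransf(\sn\triangleright\sm,\sp\triangleright\sq)$, i.e.\ it is a valid transformation.

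I do not expect a genuine obstacle: the core is a one-line linearity computation together with the two facts already in place (Proposition~\ref{prop:loc-admiss-iff-admiss} and the No-Restriction Hypothesis). The only points needing a little care are the bookkeeping of the Kronecker and indicator factors when collapsing the sum over $v\in\Xi$, and the harmless observation that the null state has to be counted among the states of every system — both already implicit in the conventions fixed earlier.
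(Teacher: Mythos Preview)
Your proof is correct and follows essentially the same route as the paper: reduce to local admissibility via Proposition~\ref{prop:loc-admiss-iff-admiss}, compute $\ExtrTransformation{s}{s^\prime}{t}{t^\prime}\SState{\alpha}[][][f][\Xi]=\chi_{\Xi}(s)\,\delta_{s^\prime f(s)}\,\SState{\alpha}[t][t^\prime]$, observe this is a valid state, and conclude by the No-Restriction Hypothesis. The only difference is that you spell out the expansion over atomic states and the null-state case more explicitly than the paper does.
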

\begin{proof}
  We just need to check that the maps \(
  \ExtrTransformation{s}{s^\prime}{t}{t^\prime} \) are locally
  admissible, and then by Prop.~\ref{prop:loc-admiss-iff-admiss} and
  the No-Restriction Hypothesis, we conclude that they actually belong
  to \( \SetTransf(\sn\triangleright\sm,\sp\triangleright\sq) \).

  Indeed, for every state $\SState{\alpha}[][][f][\Xi]$, we have
  \begin{equation*}
    \ExtrTransformation{s}{s^\prime}{t}{t^\prime} \SState{\alpha}[][][f][\Xi] =   \chi_\Xi(s)\delta_{s'f(s)}\SState{\alpha}[t][t'],
  \end{equation*}
  which is a valid state of $\sp\triangleright\sq$.
\end{proof}

\begin{proposition}\label{prop:atomic-are-li}
  The transformations \( \ExtrTransformation{s}{s^\prime}{t}{t^\prime} \in \SetTransf(
  \sn\triangleright\sm, \sp\triangleright\sq ) \) are linearly  independent.
\end{proposition}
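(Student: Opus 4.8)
The plan is to reduce the claim to the linear independence of the atomic states of the output system, using that by Local Discriminability a transformation is completely specified by its action on states (so a linear combination of transformations that annihilates every state is the zero transformation). First I would take an arbitrary vanishing combination $\sum_{(s,s^\prime,t,t^\prime)\in\Gamma_n\times\Gamma_m\times\Gamma_p\times\Gamma_q} c_{ss^\prime tt^\prime}\,\ExtrTransformation{s}{s^\prime}{t}{t^\prime} = 0$, with real coefficients $c_{ss^\prime tt^\prime}$, understood as an identity in $\SetTransf_\mathbb{R}(\sn\triangleright\sm,\sp\triangleright\sq)$.

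Next I would evaluate this identity on a generic atomic state $\SState{\alpha}[v][v^\prime]$ of $\sn\triangleright\sm$, with $(v,v^\prime)\in\Gamma_n\times\Gamma_m$. Using the defining relation $\ExtrTransformation{s}{s^\prime}{t}{t^\prime}\SState{\alpha}[v][v^\prime] = \delta_{sv}\delta_{s^\prime v^\prime}\,\SState{\alpha}[t][t^\prime]$, only the terms with $s=v$ and $s^\prime=v^\prime$ survive, so the identity collapses to $\sum_{(t,t^\prime)\in\Gamma_p\times\Gamma_q} c_{vv^\prime tt^\prime}\,\SState{\alpha}[t][t^\prime] = 0$ in $\SetStates_\mathbb{R}(\sp\triangleright\sq)$.

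Finally I would invoke the linear independence of the atomic states $\{\SState{\alpha}[t][t^\prime]\}_{(t,t^\prime)\in\Gamma_p\times\Gamma_q}$ of the output system: this is immediate from Prop.~\ref{prop:atomic-effects}, since pairing the previous identity with the atomic effect $\SEffect{a}[u][u^\prime]$ and using $\RBraKet{\NSEffect{a}[u][u^\prime]}{\NSState{\alpha}[t][t^\prime]} = \delta_{ut}\delta_{u^\prime t^\prime}$ gives $c_{vv^\prime uu^\prime} = 0$. Since $(v,v^\prime)$ and $(u,u^\prime)$ were arbitrary, every coefficient vanishes, which is exactly the asserted linear independence. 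I do not expect a genuine obstacle here: the only points deserving a word of justification are that a relation among transformations may legitimately be tested on states (a consequence of Local Discriminability) and that the atomic states of the output form a basis of $\SetStates_\mathbb{R}(\sp\triangleright\sq)$, which holds because there are exactly $p\cdot q = \dim\SetStates_\mathbb{R}(\sp\triangleright\sq)$ of them and they are linearly independent by the duality with the atomic effects just used.
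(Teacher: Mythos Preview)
Your argument is correct and essentially identical to the paper's: both take a vanishing linear combination, feed in an atomic input state, pair with an atomic output effect, and read off that each coefficient is zero. The only difference is cosmetic---you split the evaluation into two stages (first the state, then the effect) and add a remark on Local Discriminability, whereas the paper performs the pairing $\SEffect{a}[j][j^\prime]\Transformation{A}\SState{\alpha}[i][i^\prime]=c_{ii^\prime jj^\prime}$ in one line.
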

\begin{proof}
  Let us show that a null linear combination of the transformations \( \ExtrTransformation{s}{s^\prime}{t}{t^\prime} \in \SetTransf(
  \sn\triangleright\sm, \sp\triangleright\sq ) \)---say \( \Transformation{A} = \sum_{ss^\prime
tt^\prime} c_{ss^\prime tt^\prime} \ExtrTransformation{s}{s^\prime}{t}{t^\prime} \)---necessarily
has \( c_{ss^\prime tt^\prime}=0 \), for all \( s\in\Gamma_n \), \( s^\prime\in\Gamma_m \), \(
t\in\Gamma_p \), \( t^\prime\in\Gamma_q \). Indeed, for any couple \(
\SState{\alpha}[i][i^\prime] \in \SetStates( \sn\triangleright\sm  ) \), \(
\SEffect{a}[j][j^\prime] \in \SetStates( \sp\triangleright\sq  ) \) we have
\begin{equation*}
0 = \SEffect{a}[j][j^\prime] \Transformation{A} \SState{\alpha}[i][i^\prime] = c_{ii^\prime
jj^\prime},
\end{equation*}
for every \( (i,i^\prime,j,j^\prime) \in \Gamma_n \times \Gamma_m \times \Gamma_p \times \Gamma_q
\), i.e.~the transformations \( \ExtrTransformation{s}{s^\prime}{t}{t^\prime} \in \SetTransf(
  \sn\triangleright\sm, \sp\triangleright\sq ) \) are linearly independent.
\end{proof}

\begin{proposition}\label{prop:atomic-are-atomic}
  The transformations \( \ExtrTransformation{s}{s^\prime}{t}{t^\prime} \in \SetTransf(
  \sn\triangleright\sm, \sp\triangleright\sq ) \) are atomic.
\end{proposition}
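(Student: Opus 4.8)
The plan is to show that every way of writing $\ExtrTransformation{s}{s^\prime}{t}{t^\prime} = \Transformation{C} + \Transformation{D}$ with $\Transformation{C},\Transformation{D}$ in the cone $\lambda\SetTransf(\sn\triangleright\sm,\sp\triangleright\sq)$ (equivalently, with $\Transformation{C},\Transformation{D}$ transformations summing to $\ExtrTransformation{s}{s^\prime}{t}{t^\prime}$, as in the proof of Prop.~\ref{prop:otimes-is-atomic}) forces $\Transformation{C}$ and $\Transformation{D}$ to be nonnegative multiples of $\ExtrTransformation{s}{s^\prime}{t}{t^\prime}$, which is exactly the assertion that $\ExtrTransformation{s}{s^\prime}{t}{t^\prime}$ lies on an extremal ray of the cone, i.e.\ is \emph{atomic}. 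The argument is entirely driven by evaluating both sides on the atomic states $\SState{\alpha}[v][v^\prime]$ of $\sn\triangleright\sm$: by Prop.~\ref{prop:unavoidable} and the dimension count these form a basis of $\SetStates_{\mathbb R}(\sn\triangleright\sm)$, and the images $\Transformation{C}\SState{\alpha}[v][v^\prime]$, $\Transformation{D}\SState{\alpha}[v][v^\prime]$ lie in $\lambda\SetStates(\sp\triangleright\sq)$, which is precisely the nonnegative orthant in the atomic-state basis $\{\SState{\alpha}[i][i^\prime]\}$ of $\sp\triangleright\sq$, so that coordinatewise comparisons are legitimate and no cancellation between atomic states can occur.

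First I would treat the inputs $\SState{\alpha}[v][v^\prime]$ with $(v,v^\prime)\neq(s,s^\prime)$. Here $\ExtrTransformation{s}{s^\prime}{t}{t^\prime}\SState{\alpha}[v][v^\prime] = 0$, so $\Transformation{C}\SState{\alpha}[v][v^\prime] + \Transformation{D}\SState{\alpha}[v][v^\prime] = 0$; since both summands have nonnegative coordinates and the $\SState{\alpha}[i][i^\prime]$ are linearly independent, both must vanish, i.e.\ $\Transformation{C}\SState{\alpha}[v][v^\prime] = \Transformation{D}\SState{\alpha}[v][v^\prime] = 0$.

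Next I would handle the single remaining input $\SState{\alpha}[s][s^\prime]$, for which $\Transformation{C}\SState{\alpha}[s][s^\prime] + \Transformation{D}\SState{\alpha}[s][s^\prime] = \SState{\alpha}[t][t^\prime]$. Expanding the two summands in the atomic-state basis with nonnegative coefficients $c_{ij},d_{ij}\geq 0$, the equality becomes $c_{ij}+d_{ij} = \delta_{it}\delta_{jt^\prime}$, whence $c_{ij}=d_{ij}=0$ for $(i,j)\neq(t,t^\prime)$ and $c_{tt^\prime}+d_{tt^\prime}=1$ with $c_{tt^\prime},d_{tt^\prime}\geq 0$. Setting $\mu := c_{tt^\prime}\in[0,1]$ gives $\Transformation{C}\SState{\alpha}[s][s^\prime] = \mu\,\SState{\alpha}[t][t^\prime]$ and $\Transformation{D}\SState{\alpha}[s][s^\prime] = (1-\mu)\,\SState{\alpha}[t][t^\prime]$.

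Combining the two steps, $\Transformation{C}$ and $\mu\,\ExtrTransformation{s}{s^\prime}{t}{t^\prime}$ take the same value on every element of the basis $\{\SState{\alpha}[i][i^\prime]\}$ of $\SetStates_{\mathbb R}(\sn\triangleright\sm)$, and likewise $\Transformation{D}$ and $(1-\mu)\,\ExtrTransformation{s}{s^\prime}{t}{t^\prime}$; since a transformation is completely specified by its action on states (as recalled above, and here simply because two linear maps agreeing on a basis coincide), we get $\Transformation{C} = \mu\,\ExtrTransformation{s}{s^\prime}{t}{t^\prime}$ and $\Transformation{D} = (1-\mu)\,\ExtrTransformation{s}{s^\prime}{t}{t^\prime}$, so $\ExtrTransformation{s}{s^\prime}{t}{t^\prime}$ is atomic. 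I do not expect a genuine obstacle; the only items deserving a line of justification are the identification of $\lambda\SetStates(\sp\triangleright\sq)$ with the nonnegative orthant in the atomic-state basis and the fact that the $\SState{\alpha}[i][i^\prime]$ really form a basis, both of which follow from the construction in Sect.~\ref{uffa} and Prop.~\ref{prop:unavoidable}.
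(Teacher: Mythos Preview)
Your proof is correct and follows essentially the same strategy as the paper's: apply the putative decomposition to (atomic) input states, expand the resulting output states in the atomic basis of $\SetStates(\sp\triangleright\sq)$, and use linear independence together with nonnegativity of the coefficients to force both summands to be multiples of $\ExtrTransformation{s}{s^\prime}{t}{t^\prime}$. The only noteworthy difference is that the paper takes $\Transformation{C},\Transformation{D}\in\SetTransf(\sn\triangleright\sm,\sp\triangleright\sq)$ and then invokes determinism of the theory to pin the output coefficients to $\{0,1\}$ (so one summand is identically zero), whereas you work directly with the cone and obtain $\mu\in[0,1]$; your formulation matches the extremal-ray definition of \emph{atomic} more directly and does not rely on the $\{0,1\}$ restriction.
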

\begin{proof}
  Let us suppose by contradiction that the transformation \(
  \ExtrTransformation{s}{s^\prime}{t}{t^\prime} \) is not atomic, namely \(
  \ExtrTransformation{s}{s^\prime}{t}{t^\prime} = \Transformation{A} + \Transformation{B} \) for some
  \( \Transformation{A}, \Transformation{B} \in \SetTransf( \sn\triangleright\sm, \sp\triangleright\sq
  ) \). For an arbitrary state \( \SState{\alpha}[][][f][\Xi] \in \SetStates( \sn\triangleright\sm)
  \) we have that \( \ExtrTransformation{s}{s^\prime}{t}{t^\prime} \SState{\alpha}[][][f][\Xi] =
  \chi_{\Xi}(s)\ \delta_{s^\prime f(s)}\ \SState{\alpha}[t][t^\prime] \), \( \Transformation{A}
  \SState{\alpha}[][][f][\Xi] = \SState{\alpha^{\Transformation{A}}} = \sum_{ss^\prime}
  c_{ss^\prime}^{\Transformation{A}} \SState{\alpha}[s][s^\prime] \), \( \Transformation{B}
  \SState{\alpha}[][][f][\Xi] = \SState{\alpha^{\Transformation{B}}} = \sum_{ss^\prime}
  c_{ss^\prime}^{\Transformation{B}} \SState{\alpha}[s][s^\prime] \), where we have expanded the
  states \( \SState{\alpha^{\Transformation{A}}}, \SState{\alpha^{\Transformation{B}}} \in
  \SetStates(\sp\triangleright\sq) \) over the atomic states \( \SState{\alpha}[s][s^\prime] \) of the
  system \( \sp\triangleright\sq \).  By hypothesis we have \(
  \ExtrTransformation{s}{s^\prime}{t}{t^\prime} \SState{\alpha}[][][f][\Xi] = \Transformation{A}
  \SState{\alpha}[][][f][\Xi] + \Transformation{B} \SState{\alpha}[][][f][\Xi] \), namely
  \begin{equation*}
    \chi_{\Xi}(s) \delta_{s^\prime f(s)} \SState{\alpha}[t][t^\prime] = \sum_{ss^\prime}
    c_{ss^\prime}^{\Transformation{A}} \SState{\alpha}[s][s^\prime] + \sum_{ss^\prime}
    c_{ss^\prime}^{\Transformation{B}} \SState{\alpha}[s][s^\prime]   
  \end{equation*}
  Since the atomic states \( \SState{\alpha}[s][s^\prime] \) are linearly independent we have that
  the previous relation can be rewritten as
  \begin{align*}
    & c_{ss^\prime}^{\Transformation{A}} + c_{ss^\prime}^{\Transformation{B}} = \chi_{\Xi}(s)
      \delta_{s^\prime f(s)} && \text{if }t=s,t^\prime=s^\prime \\
    & c_{ss^\prime}^{\Transformation{A}} + c_{ss^\prime}^{\Transformation{B}} = 0 &&
      \text{otherwise}
  \end{align*}
  Since \( c_{ss^\prime}^{\Transformation{A}}, c_{ss^\prime}^{\Transformation{B}} = 0,1 \), we
  conclude from the second relation that \( c_{ss^\prime}^{\Transformation{A}} =
  c_{ss^\prime}^{\Transformation{B}} = 0 \) if \( t \ne s \) or \( t^\prime \ne s^\prime \), while the
  first leads to \( c_{tt^\prime}^{\Transformation{A}} = \chi_{\Xi}(s)\ \delta_{s^\prime f(s)} \) and
  \( c_{tt^\prime}^{\Transformation{B}} = 0 \) (or the other way round). Since the initial state \(
  \SState{\alpha}[][][f][\Xi] \in \SetStates(\sn\triangleright\sm) \) is arbitrary we conclude that \(
  \ExtrTransformation{s}{s^\prime}{t}{t^\prime} = \Transformation{A} + 0 \) (or \(
  \ExtrTransformation{s}{s^\prime}{t}{t^\prime} = 0 + \Transformation{B} \)), i.e.~\(
  \ExtrTransformation{s}{s^\prime}{t}{t^\prime} \) is atomic.
\end{proof}

\begin{proposition}\label{prop:no-other-atomic}
  There are no atomic transformations in \( \SetTransf(\sn\triangleright\sm,\sp\triangleright\sq) \)
  other than \( \ExtrTransformation{s}{s^\prime}{t}{t^\prime} \).
\end{proposition}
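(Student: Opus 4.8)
The plan is to expand an arbitrary $\Transformation{T}\in\SetTransf(\sn\triangleright\sm,\sp\triangleright\sq)$ over the atomic maps $\ExtrTransformation{s}{s^\prime}{t}{t^\prime}$, show that determinism forces every coefficient to be $0$ or $1$, and then show that as soon as two coefficients are nonzero the transformation decomposes as a sum of two nonzero valid transformations, contradicting extremality.

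First I would observe that the $\ExtrTransformation{s}{s^\prime}{t}{t^\prime}$, with $(s,s^\prime,t,t^\prime)\in\Gamma_n\times\Gamma_m\times\Gamma_p\times\Gamma_q$, are $nmpq$ linearly independent (Prop.~\ref{prop:atomic-are-li}) valid transformations (Prop.~\ref{prop:atomic-are-admissible}). Since Local Discriminability makes a transformation completely determined by its action on states, $\dim\SetTransf_{\mathbb R}(\sn\triangleright\sm,\sp\triangleright\sq)\le(\dim\SetStates_{\mathbb R}(\sn\triangleright\sm))\,(\dim\SetStates_{\mathbb R}(\sp\triangleright\sq))=nmpq$, so the $\ExtrTransformation{s}{s^\prime}{t}{t^\prime}$ form a basis of $\SetTransf_{\mathbb R}(\sn\triangleright\sm,\sp\triangleright\sq)$. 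Hence $\Transformation{T}=\sum_{ss^\prime tt^\prime}c_{ss^\prime tt^\prime}\ExtrTransformation{s}{s^\prime}{t}{t^\prime}$, and composing $\Transformation{T}$ with the atomic state $\SState{\alpha}[s][s^\prime]$ and the atomic effect $\SEffect{a}[t][t^\prime]$, using the defining relation of $\ExtrTransformation{s}{s^\prime}{t}{t^\prime}$ together with $\RBraKet{\NSEffect{a}[t][t^\prime]}{\NSState{\alpha}[v][v^\prime]}=\delta_{tv}\delta_{t^\prime v^\prime}$ (Prop.~\ref{prop:atomic-effects}), gives $c_{ss^\prime tt^\prime}=\SEffect{a}[t][t^\prime]\Transformation{T}\SState{\alpha}[s][s^\prime]$. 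This number is the probability of the closed circuit that prepares $\SState{\alpha}[s][s^\prime]$, applies $\Transformation{T}$ and measures $\SEffect{a}[t][t^\prime]$, so by Definition~\ref{def:deterministic-OPT} it lies in $\{0,1\}$. Setting $S:=\{(s,s^\prime,t,t^\prime)\mid c_{ss^\prime tt^\prime}=1\}$ we obtain $\Transformation{T}=\sum_{(s,s^\prime,t,t^\prime)\in S}\ExtrTransformation{s}{s^\prime}{t}{t^\prime}$.

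Next let $\Transformation{T}$ be atomic; in particular $\Transformation{T}\neq0$, so $S\neq\emptyset$, and if $|S|=1$ then $\Transformation{T}$ is one of the $\ExtrTransformation{s}{s^\prime}{t}{t^\prime}$ and we are done. Suppose instead $|S|\ge2$, pick any partition $S=S_1\sqcup S_2$ into nonempty subsets, and set $\Transformation{T}_i:=\sum_{(s,s^\prime,t,t^\prime)\in S_i}\ExtrTransformation{s}{s^\prime}{t}{t^\prime}$, so $\Transformation{T}=\Transformation{T}_1+\Transformation{T}_2$. The crux is to show that each $\Transformation{T}_i$ is a genuine transformation. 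By Prop.~\ref{prop:loc-admiss-iff-admiss} it suffices to check local admissibility, and for an arbitrary state $\SState{\alpha}[][][f][\Xi]$ one has $\Transformation{T}_i\SState{\alpha}[][][f][\Xi]=\sum_{(s,s^\prime,t,t^\prime)\in S_i}\chi_\Xi(s)\delta_{s^\prime f(s)}\SState{\alpha}[t][t^\prime]$, a partial subsum of the atomic states making up $\Transformation{T}\SState{\alpha}[][][f][\Xi]$. Since $\Transformation{T}$ is admissible, $\Transformation{T}\SState{\alpha}[][][f][\Xi]$ is a legitimate state $\SState{\alpha}[][][h][\Delta]=\sum_{v\in\Delta}\SState{\alpha}[v][h(v)]$, in which every atomic state occurs with coefficient $1$ and the first indices are pairwise distinct; hence every partial subsum again has coefficients at most $1$ and pairwise distinct first indices, so it is a legitimate state of $\sp\triangleright\sq$. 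Therefore $\Transformation{T}_i$ is locally admissible, hence admissible, and by the No-Restriction Hypothesis $\Transformation{T}_i\in\SetTransf(\sn\triangleright\sm,\sp\triangleright\sq)$.

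Finally, by the linear independence of the $\ExtrTransformation{s}{s^\prime}{t}{t^\prime}$ together with the nonemptiness and disjointness of $S_1$ and $S_2$, the transformations $\Transformation{T}_1,\Transformation{T}_2$ are nonzero and do not lie on the ray $\mathbb{R}_{\ge0}\Transformation{T}$; thus $\Transformation{T}=\Transformation{T}_1+\Transformation{T}_2$ exhibits $\Transformation{T}$ as a sum of two elements of the cone $\lambda\SetTransf(\sn\triangleright\sm,\sp\triangleright\sq)$ off that ray, contradicting the extremality of $\Transformation{T}$. Hence $|S|=1$, i.e.\ $\Transformation{T}=\ExtrTransformation{s}{s^\prime}{t}{t^\prime}$ for some $(s,s^\prime,t,t^\prime)$; combined with Prop.~\ref{prop:atomic-are-atomic} this shows that the $\ExtrTransformation{s}{s^\prime}{t}{t^\prime}$ are precisely the atomic transformations. \emph{The main obstacle} I anticipate is the admissibility check for the pieces $\Transformation{T}_i$: one must be sure that the atomic decomposition of any legitimate state of this theory has no repeated terms and only unit coefficients --- which is exactly what the additivity rule for states (valid only for disjoint supports) secures --- so that restricting the index set $S$ can never create an invalid state.
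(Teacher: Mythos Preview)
Your proof is correct. The overall skeleton matches the paper's: both expand an arbitrary $\Transformation{T}$ over the basis $\{\ExtrTransformation{s}{s^\prime}{t}{t^\prime}\}$, identify the coefficients as probabilities $\SEffect{a}[t][t^\prime]\Transformation{T}\SState{\alpha}[s][s^\prime]$, and then argue that atomicity forces a single nonzero coefficient.

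The difference lies in how the last step is carried out. You invoke determinism to pin the coefficients to $\{0,1\}$, partition the support set $S$, and then \emph{verify local admissibility of each piece} $\Transformation{T}_i$ so that, via Prop.~\ref{prop:loc-admiss-iff-admiss} and the No-Restriction Hypothesis, $\Transformation{T}_i\in\SetTransf$; the decomposition $\Transformation{T}=\Transformation{T}_1+\Transformation{T}_2$ then contradicts extremality. That admissibility check is sound (the image of a valid state under $\Transformation{T}$ has pairwise distinct first indices and unit coefficients, so any subsum is again a valid state), but it is more work than needed. The paper observes instead that each $\ExtrTransformation{s}{s^\prime}{t}{t^\prime}$ is \emph{already} known to belong to $\lambda\SetTransf$ (Prop.~\ref{prop:atomic-are-admissible}); hence if all coefficients were nonnegative with at least two nonzero, $\Transformation{T}$ would be a nontrivial conic combination of elements of $\lambda\SetTransf$, immediately violating extremality---so an atomic $\Transformation{T}$ distinct from the $\ExtrTransformation{s}{s^\prime}{t}{t^\prime}$ would need a negative coefficient, contradicting $0\le c^{ss^\prime}_{tt^\prime}\le1$. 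In short, the paper peels off a single generator rather than an arbitrary subsum, which avoids your admissibility verification entirely; your route is more constructive but the paper's is shorter.
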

\begin{proof}
  Since the dimension of \( \SetTransf_{\mathbb{R}}(
  \sn\triangleright\sm, \sp\triangleright\sq ) \) is \( \dim
  \SetStates_\mathbb{R}( \sn\triangleright\sm ) \times \dim
  \SetStates_\mathbb{R}( \sp\triangleright\sq ) = n \times m \times p
  \times q \), and the number of (linearly independent) atomic
  transformations \( \ExtrTransformation{s}{s^\prime}{t}{t^\prime} \)
  is \( n \times m \times p \times q \) we conclude that such atomic
  maps span the entire space of linear transformations between the two
  linear spaces of states.

  Now let us suppose by contradiction that there exists another atomic transformation \(
  \Transformation{T} \in \SetTransf(\sn\triangleright\sm,\sp\triangleright\sq) \), different from any
  of \( \ExtrTransformation{s}{s^\prime}{t}{t^\prime} \). Since the maps \(
  \ExtrTransformation{s}{s^\prime}{t}{t^\prime} \) span all the space, we expand \( \Transformation{T}
  \) over them:
  \begin{equation*}
    \Transformation{T} = \sum_{ss^\prime tt^\prime} c^{ss^\prime}_{tt^\prime}
    \ExtrTransformation{s}{s^\prime}{t}{t^\prime}.
  \end{equation*}
  Since \( \Transformation{T} \) is atomic, it has to lie out of the cone built from the
  transformations \( \ExtrTransformation{s}{s^\prime}{t}{t^\prime} \); hence at least one of the
  coefficients  \(  c^{ss^\prime}_{tt^\prime} \) is negative. Since \( \SEffect{a}[t][t^\prime]
  \Transformation{T} \SState{\alpha}[s][s^\prime] = c^{ss^\prime}_{tt^\prime} \) is a probability, we
  have that \( 0 \le c^{ss^\prime}_{tt^\prime} \le 1 \), i.e.~there are no atomic transformations
  other than \( \ExtrTransformation{s}{s^\prime}{t}{t^\prime} \).
\end{proof}

\begin{proposition}\label{prop:transf-are-admissible}
  If the No-Restriction Hypothesis holds, the transformations \( \SetTransf_\mathbb{R}(
  \sn\triangleright\sm, \sp\triangleright\sq ) \ni \Transformation{T}^{f\,g}_{\Omega} :=
  \sum_{(s^\prime,t) \in \Omega} \ExtrTransformation{f(t)}{s^\prime}{t}{g(t,s^\prime)} \) with \(
  \Omega \subseteq \Gamma_p \times \Gamma_m \), \( f:~\Gamma_p \to \Gamma_n \), and \( g:~\Gamma_p
  \times \Gamma_m \to \Gamma_q \) actually belong to \( \SetTransf( \sn\triangleright\sm,
  \sp\triangleright\sq ) \).
\end{proposition}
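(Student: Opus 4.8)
The plan is to reuse the scheme already employed in Props.~\ref{prop:atomic-effects}--\ref{prop:atomic-are-admissible}: reduce admissibility to \emph{local} admissibility via Prop.~\ref{prop:loc-admiss-iff-admiss}, verify local admissibility by a direct evaluation on a generic state, and then invoke the No-Restriction Hypothesis to conclude that $\Transformation{T}^{f\,g}_{\Omega}$ actually belongs to $\SetTransf(\sn\triangleright\sm,\sp\triangleright\sq)$.

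Concretely, I would take an arbitrary state $\SState{\alpha}[][][\varphi][\Xi]\in\SetStates(\sn\triangleright\sm)$, with $\Xi\subseteq\Gamma_n$ and $\varphi:\Xi\to\Gamma_m$, and evaluate $\Transformation{T}^{f\,g}_{\Omega}$ on it by linearity, using the action of the atomic maps recorded in Prop.~\ref{prop:atomic-are-admissible}, namely $\ExtrTransformation{s}{s^\prime}{t}{t^\prime}\SState{\alpha}[][][\varphi][\Xi]=\chi_\Xi(s)\,\delta_{s^\prime,\varphi(s)}\,\SState{\alpha}[t][t^\prime]$. This yields
\begin{equation*}
  \Transformation{T}^{f\,g}_{\Omega}\,\SState{\alpha}[][][\varphi][\Xi]
  =\sum_{(s^\prime,t)\in\Omega}\chi_\Xi\bigl(f(t)\bigr)\,\delta_{s^\prime,\varphi(f(t))}\,\SState{\alpha}[t][g(t,s^\prime)].
\end{equation*}
For each fixed $t$ the Kronecker delta keeps at most the single summand $s^\prime=\varphi(f(t))$, and this summand survives only when $f(t)\in\Xi$---so that $\varphi(f(t))$ is defined---and $(\varphi(f(t)),t)\in\Omega$. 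Hence the whole expression collapses to $\sum_{t\in\Theta}\SState{\alpha}[t][g(t,\varphi(f(t)))]$, where $\Theta:=\{\,t\in\Gamma_p\mid f(t)\in\Xi\text{ and }(\varphi(f(t)),t)\in\Omega\,\}$.

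The only point that needs care---and it is not really an obstacle---is to recognize this sum as a legitimate state of $\sp\triangleright\sq$. Each summand $\SState{\alpha}[t][g(t,\varphi(f(t)))]$ is an atomic state with first label $t$, and since the outer summation ranges over the \emph{distinct} elements of $\Theta\subseteq\Gamma_p$ these atomic states have pairwise disjoint singleton supports; iterating the state-addition rule $\SState{\alpha}[][][u][V]+\SState{\alpha}[][][w][W]\equiv\SState{\alpha}[][][z][V\cup W]$, valid for $V\cap W=\emptyset$, therefore gives $\Transformation{T}^{f\,g}_{\Omega}\,\SState{\alpha}[][][\varphi][\Xi]=\SState{\alpha}[][][\psi][\Theta]\in\SetStates(\sp\triangleright\sq)$ with $\psi(t):=g(t,\varphi(f(t)))$. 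This establishes local admissibility; Prop.~\ref{prop:loc-admiss-iff-admiss} then upgrades it to full admissibility, and the No-Restriction Hypothesis places $\Transformation{T}^{f\,g}_{\Omega}$ in $\SetTransf(\sn\triangleright\sm,\sp\triangleright\sq)$, as claimed.
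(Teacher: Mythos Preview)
Your proof is correct and follows essentially the same route as the paper: reduce to local admissibility via Prop.~\ref{prop:loc-admiss-iff-admiss}, evaluate $\Transformation{T}^{f\,g}_{\Omega}$ on a generic state, and check that the output lies in $\SetStates(\sp\triangleright\sq)$. The only cosmetic difference is that you collapse the $s'$-sum immediately through the Kronecker delta to get a single sum $\sum_{t\in\Theta}\SState{\alpha}[t][\psi(t)]$ with distinct first labels, whereas the paper first groups terms by $s'$ into states $\SState{\alpha}[][][g_{s'}][\Upsilon_{s'}]$ and then verifies that the supports $\Upsilon_{s'}$ are pairwise disjoint---same computation, slightly different bookkeeping.
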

\begin{proof}
  By Prop.~\ref{prop:loc-admiss-iff-admiss} and the No-Restriction
  Hypothesis, we just need to show that the linear maps \(
  \Transformation{T}^{f\,g}_{\Omega} \) are locally admissible.

  For an arbitrary state $\SState{\alpha}[][][h][\Xi]$ we have
  \begin{align*}
    \Transformation{T}^{f\,g}_{\Omega}\SState{\alpha}[][][h][\Xi]&=
    \sum_{(s^\prime,t) \in \Omega} \ExtrTransformation{f(t)}{s^\prime}{t}{g(t,s^\prime)}\SState{\alpha}[][][h][\Xi]\\
    &= \sum_{s^\prime} \sum_{t} \chi_\Omega(s',t) \chi_\Xi(f(t))\delta_{s'h(f(t))}\SState{\alpha}[t][g(t,s')].
  \end{align*}
  The internal sum represents the state \( \SState{\alpha}[][][g_{s^\prime}][\Upsilon_{s^\prime}]
\in \SetStates(\sp\triangleright\sq) \)
with \( g_{s^\prime}:~\Gamma_p \to \Gamma_q \), \( g_{s^\prime}(x) := g(s^\prime,x) \) and the set
\( \Upsilon_{s^\prime} \subseteq \Gamma_p \) defined by \( \chi_{\Upsilon_{s^\prime}}(x) :=
\chi_\Omega(s',x) \chi_\Xi(f(x))\delta_{s'h(f(x))} \). The whole sum \( \sum_{s^\prime}
\SState{\alpha}[][][g_{s^\prime}][\Upsilon_{s^\prime}] \) represents a valid state of \(
\sp\triangleright\sq \), indeed for every \( s_0^\prime \neq s_1^\prime \) the sets \(
\Upsilon_{s^\prime_0} \), \( \Upsilon_{s^\prime_1} \) are disjoint since
\begin{align*}
&  \chi_{\Upsilon_{s^\prime_0} \cap \Upsilon_{s^\prime_1}}(x) =  
\chi_{\Upsilon_{s^\prime_0}}(x)
\chi_{\Upsilon_{s^\prime_1}}(x) = \\
& = \chi_\Omega(s^\prime_0,x) \chi_\Omega(s^\prime_1,x)
\chi^2_\Xi(f(x)) \delta_{s^\prime_0 h(f(x))} \delta_{s^\prime_1 h(f(x))} = \\
& = \chi_\Omega(s^\prime_0,x) \chi_\Omega(s^\prime_1,x)
\chi^2_\Xi(f(x)) \delta_{s^\prime_0 h(f(x))} \delta_{s^\prime_0 s^\prime_1},
\end{align*}
which is equal to zero when \( s^\prime_0 \ne s^\prime_1 \) thanks to the last Kronecker's delta.
\end{proof}

\begin{proposition}\label{prop:transformations}
  All the elements of \( \SetTransf( \sn\triangleright\sm, \sp\triangleright\sq ) \) have
  necessarily the form: \( \Transformation{T}^{f\,g}_{\Omega} := \sum_{ (s^\prime,t) \in \Omega}
  \ExtrTransformation{f(t)}{s^\prime}{t}{g(t,s^\prime)} \) with \( \Omega \subseteq \Gamma_p \times
  \Gamma_m \), \( f:~\Gamma_p\to\Gamma_n \), and \( g:~ \Gamma_p\times\Gamma_m\to\Gamma_q \).
\end{proposition}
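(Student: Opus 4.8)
The plan is to expand an arbitrary $\Transformation{T}\in\SetTransf(\sn\triangleright\sm,\sp\triangleright\sq)$ in the basis of atomic transformations and then read off the combinatorial data $f$, $g$, $\Omega$ from local admissibility. By Props.~\ref{prop:atomic-are-atomic}, \ref{prop:atomic-are-li} and \ref{prop:no-other-atomic} the maps $\ExtrTransformation{s}{s^\prime}{t}{t^\prime}$ form a basis of $\SetTransf_{\mathbb R}(\sn\triangleright\sm,\sp\triangleright\sq)$, so first I would write $\Transformation{T}=\sum_{ss^\prime tt^\prime}c^{ss^\prime}_{tt^\prime}\,\ExtrTransformation{s}{s^\prime}{t}{t^\prime}$. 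Pairing with an atomic state and an atomic effect gives $c^{ss^\prime}_{tt^\prime}=\SEffect{a}[t][t^\prime]\Transformation{T}\SState{\alpha}[s][s^\prime]$, which is the probability of a closed circuit; since the theory is deterministic, $c^{ss^\prime}_{tt^\prime}\in\{0,1\}$.

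Next I would extract the support structure of the family $\{c^{ss^\prime}_{tt^\prime}\}$ using that, by Prop.~\ref{prop:loc-admiss-iff-admiss}, it suffices to impose local admissibility of $\Transformation{T}$. Applying $\Transformation{T}$ to the atomic state $\SState{\alpha}[s][s^\prime]$ yields $\sum_{tt^\prime}c^{ss^\prime}_{tt^\prime}\,\SState{\alpha}[t][t^\prime]$, which must be a valid state of $\sp\triangleright\sq$; since in a valid state each first index occurs at most once, for every triple $(s,s^\prime,t)$ there is at most one $t^\prime$ with $c^{ss^\prime}_{tt^\prime}=1$. Applying $\Transformation{T}$ instead to $\SState{\alpha}[s_0][s^\prime_0]+\SState{\alpha}[s_1][s^\prime_1]$, which is valid whenever $s_0\neq s_1$, and requiring the image to be valid forces the sets $\Delta^{ss^\prime}:=\{\,t\in\Gamma_p : \exists\, t^\prime,\ c^{ss^\prime}_{tt^\prime}=1\,\}$ to satisfy $\Delta^{s_0 s^\prime_0}\cap\Delta^{s_1 s^\prime_1}=\emptyset$ for all $s^\prime_0,s^\prime_1$ when $s_0\neq s_1$ (a common element would produce, in the image, either a repeated first index or a coefficient equal to $2$). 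Hence whenever $t\in\Delta^{ss^\prime}$ the index $s$ depends only on $t$: this defines $f:\Gamma_p\to\Gamma_n$ (fixed arbitrarily off $\bigcup_{ss^\prime}\Delta^{ss^\prime}$), and the first constraint then lets me define $g:\Gamma_p\times\Gamma_m\to\Gamma_q$ by taking $g(t,s^\prime)$ to be the unique $t^\prime$ with $c^{f(t)\,s^\prime}_{t\,t^\prime}=1$ when such a $t^\prime$ exists, and arbitrary otherwise. Finally I would set $\Omega:=\{(s^\prime,t) : t\in\Delta^{f(t)\,s^\prime}\}$.

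To conclude I would compare coefficients. By linear independence of the atomic transformations it is enough to check that, for every $(\sigma,\sigma^\prime,\tau,\tau^\prime)$, the coefficient of $\ExtrTransformation{\sigma}{\sigma^\prime}{\tau}{\tau^\prime}$ in $\Transformation{T}^{f\,g}_{\Omega}$—which is $1$ exactly when $(\sigma^\prime,\tau)\in\Omega$, $\sigma=f(\tau)$ and $\tau^\prime=g(\tau,\sigma^\prime)$, and $0$ otherwise—equals $c^{\sigma\sigma^\prime}_{\tau\tau^\prime}$. If $c^{\sigma\sigma^\prime}_{\tau\tau^\prime}=1$ then $\tau\in\Delta^{\sigma\sigma^\prime}$, so $\sigma=f(\tau)$ and $(\sigma^\prime,\tau)\in\Omega$, while $\tau^\prime=g(\tau,\sigma^\prime)$ by uniqueness of the output index; conversely, the three defining conditions force $c^{\sigma\sigma^\prime}_{\tau\tau^\prime}=1$ directly from the definition of $g$. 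This gives $\Transformation{T}=\Transformation{T}^{f\,g}_{\Omega}$.

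The only genuine obstacle I anticipate is the bookkeeping in the middle paragraph, i.e.~checking that $f$, $g$ and $\Omega$ are well-defined; but this is entirely governed by the disjointness of the $\Delta^{ss^\prime}$ and the single-output property, each of which is merely one application of local admissibility of $\Transformation{T}$ to a (sum of) atomic state(s), in the same vein as Props.~\ref{prop:loc-admiss-iff-admiss} and \ref{prop:transf-are-admissible}.
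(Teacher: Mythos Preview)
Your proof is correct and shares the paper's overall strategy---expand $\Transformation{T}$ in the atomic basis, use determinism to get $\{0,1\}$ coefficients, then derive combinatorial constraints that force the $(f,g,\Omega)$ parametrisation---but the test objects differ. The paper argues by contradiction: it feeds $\Transformation{T}$ a \emph{deterministic} state $\SState{\varepsilon}[][][h]$ and pairs with a \emph{deterministic} effect $\SEffect{e}[j]$, obtaining a probability $\geq 2$ whenever two nonzero coefficients share the same output index $t$ but come from different input indices $s$. You instead apply $\Transformation{T}$ to an atomic state and to a sum of two atomic states and read the constraints directly from the shape of $\SetStates(\sp\triangleright\sq)$, then build $f$, $g$, $\Omega$ constructively. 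Your route makes the ``at most one $t'$ for each $(s,s',t)$'' constraint explicit (the paper handles it only implicitly), and the disjointness of the $\Delta^{ss'}$ is the exact analogue of the paper's two contradiction cases; what the paper's choice buys is that a single pairing $\SEffect{e}[j]\Transformation{T}\SState{\varepsilon}[h]$ simultaneously tests all $s$ at once, so the case analysis is shorter at the cost of less transparent bookkeeping.
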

\begin{proof}
  Given a generic transformation \( \Transformation{T} = \sum_{ss^\prime tt^\prime} c_{ss^\prime
  tt^\prime} \ExtrTransformation{s}{s^\prime}{t}{t^\prime} \), we have that \(
  \SEffect{a}[j][j^\prime] \Transformation{T} \SState{\alpha}[i][i^\prime] =  c_{ii^\prime jj^\prime}
  \). Since \( c_{ii^\prime jj^\prime} \) is a probability in a deterministic theory, we have \( (
  c_{ss^\prime tt^\prime} = 0) \vee ( c_{ss^\prime tt^\prime} = 1 ) \), \( \forall
  (s,s^\prime,t,t^\prime) \in \Gamma_n \times \Gamma_m \times \Gamma_p \times \Gamma_q \).

  By contradiction, let us suppose that the transformation \( \Transformation{T} = \sum_{ss^\prime
  tt^\prime} c_{ss^\prime tt^\prime}  \ExtrTransformation{s}{s^\prime}{t}{t^\prime} \) with \(
  c_{ii^\prime jj^\prime} = c_{ki^\prime jl^\prime} = 1 \) with \( i \ne k \), \( j^\prime \ne
  l^\prime \). Let \( h:\Gamma_n\to\Gamma_m \) with \( h(x)=i^\prime \) \( \forall x\in\Gamma_n \),
  then we have \( \SEffect{e}[][][j] \Transformation{T} \SState{\varepsilon}[][][h] \geq 2 \), i.e.~an
  absurd.

  In such a way we have not ruled out the case \( \sum_{(s^\prime,t)\in\Omega}
  \ExtrTransformation{f(t,s^\prime)}{s^\prime}{t}{g(t,s^\prime)} \).  A transformation of this last
  form must have a couple of coefficients such that \( c_{ii^\prime jj^\prime} = c_{kk^\prime
  jl^\prime} = 1 \) with \( i \ne k \),  \( i^\prime \ne k^\prime \), \( j^\prime \ne l^\prime \),
  otherwise the functional dependence of $f$ on the variable $s^\prime$ would be trivial.  Let \(
  h:\Gamma_n\to\Gamma_m \) with \(h(i)=i^\prime \), \(h(k)=k^\prime \); then we have \(
  \SEffect{e}[][][j] \Transformation{T} \SState{\varepsilon}[][][h] \geq 2 \), i.e.~again an absurd.
\end{proof}
\section*{Bibliography}
\bibliographystyle{elsarticle-num}
\bibliography{determinism_without_causality}
\end{document}